\newcommand{\mygraphsubfigure}[4]
{\subfigure[#4]{\hspace{0.5cm}\includegraphics[#2]{#3}\label{#1}\hspace{0.5cm}}}
\newtheorem{theorem}{Theorem} 
\newtheorem{definition}{Definition} 
\newtheorem{lemma}{Lemma} 
\newenvironment{proof}[1][Proof]{\begin{trivlist}
\item[\hskip \labelsep {\bfseries #1:}]}{\hspace{\fill}$\square$\end{trivlist}}
 \gdef\xxxmark{%
   \expandafter\ifx\csname @mpargs\endcsname\relax 
     \expandafter\ifx\csname @captype\endcsname\relax 
       \marginpar{xxx}
     \else
       xxx 
     \fi
   \else
     xxx 
   \fi}
 \gdef\xxx{\@ifnextchar[\xxx@lab\xxx@nolab}
 \long\gdef\xxx@lab[#1]#2{{\bf [\xxxmark #2 ---{\sc #1}]}}
 \long\gdef\xxx@nolab#1{{\bf [\xxxmark #1]}}
 \long\gdef\xxx@lab[#1]#2{}\long\gdef\xxx@nolab#1{}%
 \gdef\turnoffxxx{\long\gdef\xxx@lab[##1]##2{}\long\gdef\xxx@nolab##1{}}%
\makeatletter \hypersetup{pdftitle={\@title}}}
\let\realbfseries=\bfseries
\def\bfseries{\realbfseries\boldmath}
\def\captionfont{\sl\small}
\def\captionlabelfont{\bf\small}
{\makeatletter
 \global\let\plainfont@makecaption=\@makecaption
 \long\gdef\@makecaption#1#2{%
   \plainfont@makecaption{\captionlabelfont #1}{\captionfont #2}}}
\newif\ifabstract
\newif\iffull
\def\compactify{\itemsep=0pt \topsep=0pt \partopsep=0pt \parsep=0pt}
\let\latexusecounter=\usecounter
\newenvironment{itemize*}
  {\begin{itemize}\compactify}
  {\end{itemize}}
\newenvironment{enumerate*}
  {\def\usecounter{\compactify\latexusecounter}
   \begin{enumerate}}
  {\end{enumerate}\let\usecounter=\latexusecounter}
\let\epsilon=\varepsilon
\newcommand{\dominates}{path-expands }
\newcommand{\len}{\textrm{len}}
\newcommand{\ord}{\mathop{\rm Ord}\nolimits}
\newcommand{\annot}[2]{\mathop{\rm Annot}\nolimits_{#2}(#1)}
\newcommand{\dir}{\mathbf{dir}}
\newcommand{\defword}{\emph}
\newcommand{\RR}{\mathbb{R}}
\newcommand{\V}{V}
\newcommand{\nconf}{\mathop{\rm NConf}\nolimits}
\newcommand{\LL}{\mathcal{L}}
\newcommand{\conf}{\mathop{\rm Conf}\nolimits}
\newcommand{\asconf}[2]{\annot{\nconf_{#1}(#2)}{#2}}
\newcommand{\cgconf}[2]{C(#2)}
\newcommand{\igconf}[2]{A(#2)}
\newcommand{\rgconf}[2]{E(#2)}
\newcommand{\gconf}[2]{E(#2)}
\newcommand{\tmin}{\theta_{\mathrm{min}}}
\newcommand{\lmin}{l_{\mathrm{min}}}
\newcommand{\orel}{\succeq}
\begin{document}  
\psfragscanon




\title{A Generalized Carpenter's Rule Theorem \\ for Self-Touching Linkages}

\author{%
  Timothy G. Abbott%
    \thanks{MIT Computer Science and Artificial Intelligence Laboratory,
      32 Vassar St., Cambridge, MA 02139, USA,
      \protect\url{{tabbott,edemaine}@mit.edu}}
    \thanks{Partially supported by an NSF Graduate Research Fellowship
            and an MIT-Akamai Presidential Fellowship.}
\and
  Erik D. Demaine\footnotemark[1]
    \thanks{Partially supported by NSF CAREER award CCF-0347776,
            DOE grant DE-FG02-04ER25647, and AFOSR grant FA9550-07-1-0538.}
\and
  Blaise Gassend%
    \thanks{Exponent Failure Analysis Associates,
      149 Commonwealth Drive, Menlo Park, CA 94025, USA,
      \protect\url{blaise.gassend@m4x.org}.
      Work done while at MIT.}
}
\date{}

\maketitle

\begin{abstract}

The Carpenter's Rule Theorem states that any chain linkage in the
plane can be folded continuously between any two configurations while
preserving the bar lengths and without the bars crossing.  However,
this theorem applies only to strictly simple configurations, where
bars intersect only at their common endpoints.  We generalize the
theorem to self-touching configurations, where bars can touch but not
properly cross.  At the heart of our proof is a new definition of
self-touching configurations of planar linkages, based on an annotated
configuration space and limits of nontouching configurations.  We show
that this definition is equivalent to the previously proposed
definition of self-touching configurations, which is based on a
combinatorial description of overlapping features.  Using our new
definition, we prove the generalized Carpenter's Rule Theorem using a
topological argument.  We believe that our topological methodology
provides a powerful tool for manipulating many kinds of self-touching
objects, such as 3D hinged assemblies of polygons and rigid origami.
In particular, we show how to apply our methodology to extend to
self-touching configurations universal reconfigurability results for
open chains with slender polygonal adornments, and single-vertex rigid
origami with convex cones.

\end{abstract}

\clearpage

\section{Introduction}



In the mathematics of geometric folding
\cite{O'Rourke-1998,JCDCG2000b,OSME2001,FUCG_MSRI,FUCG}, a common
idealization is to model the underlying real-world object---a
mechanical linkage, robotic arm, protein, piece of paper, or another
object or surface---as having zero thickness.  The rods or bars that
make up a linkage become perfect mathematical line segments of fixed
length; the joints or hinges that connect them become mathematical
points; a piece of paper can be folded repeatedly ad infinitum.
While these idealizations are not entirely realistic (see
\cite{Gallivan-2002}), the zero-thickness model has led to a wealth of
powerful theorems that rarely abuse the lack of thickness and are
therefore practical.

\begin{wrapfigure}{r}{3.2in}
  \centering
  \vspace*{-0.5ex}
  \includegraphics[scale=0.9]{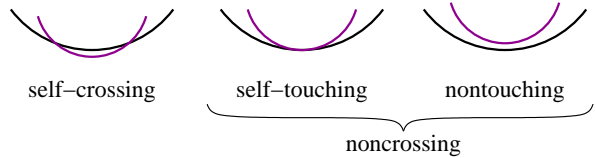}
  \vspace*{-1ex}
  \caption{The different types of configurations.}
  \vspace*{-0.5ex}
  \label{fig:vocabulary}
\end{wrapfigure}
Almost all forms of folding forbid folding objects from crossing,
matching a natural physical constraint, but at the same time allow
folding objects to touch.  Figure~\ref{fig:vocabulary} illustrates the
distinction between touching and crossing.  For example, overlapping
multiple layers of paper enables origamists to form arbitrarily
complicated shapes, both in practice and in theory~\cite{CGTA2000}.

Touching is easy to model for objects with positive thickness: allow
the boundaries, but not the interiors, to intersect.  But in the
zero-thickness model, formally distinguishing between touching and
crossing is difficult.  In particular, when two portions of the object
overlap, the geometry alone is insufficient to distinguish which
portion is on top of which.  The approach taken so far to resolving
the ambiguity is to express the information missed by the geometry
with additional combinatorial information.  A simple example is map
folding of an $m \times n$ grid of squares \cite{MapFolding}.  In this
context, the geometry of the squares is completely determined,
independent of the folding: in any successful folding that uses all
the creases, all of the squares will end up on top of each other, with
orientations specified by a checkerboard pattern in the grid.  The
folding itself can be specified by a purely combinatorial object: the
permutation of the panels that describes their total order in the
folding.  The challenge is to determine what constraints on this
combinatorial object correspond to the paper not self-crossing.  A
generalization of this approach is essentially the one taken by
\cite{PaperReachability_CCCG2004,FUCG} for defining general origami.

There are two concerns with this type of approach.

First, how do we know that the combinatorial definition corresponds to
the intended meaning of self-touching configurations?  The
combinatorial definitions inherently lack geometric intuition, so it
is hard to ``feel'' that they are correct, even though we believe they
are.

Second, how do we manipulate these definitions to prove interesting
theorems?  The complexity of the definitions makes them hard to use.
While some problems were successfully attacked in
\cite{InfinitesimallyLocked_LasVegas,PaperReachability_CCCG2004}, many
other problems about self-touching configurations remain open.  An
alternate, equivalent definition would give a new way to examine and
attack these problems.


Recently, touching has been studied for both linkages and origami.  In
the context of linkages, Connelly et
al.~\cite{InfinitesimallyLocked_LasVegas} show that self-touching
configurations of linkages could be used to prove theorems about
nearby non-self-touching perturbations.  This result essentially
reduces proving a planar linkage to be locked to an automatic,
algorithmic procedure, whereas previous arguments that dealt solely
with non-self-touching configurations were tedious and ad-hoc.  To do
this, \cite{InfinitesimallyLocked_LasVegas} introduced a combinatorial
definition of touching linkages that we will describe later.

Many results in computational origami construct folded states with the
desired properties, but do not show that the state can be reached by a
continuous folding motion.  Demaine et
al.~\cite{PaperReachability_CCCG2004} showed that this was always
possible.  To do this, they defined origami using a combinatorial
definition to handle self-touching folded states and their folding
motions.  This combinatorial definition turned out to be tedious to
work with.

\paragraph{Our results.}
In this paper, we study the self-touching analog of the Carpenter's
Rule Theorem, posed at FOCS 2000 \cite{Linkage}.  Consider a polygon
or open polygonal chain in the plane, where the edges represent rigid
bars of fixed length and the vertices represent hinges that can take
arbitrary angles.  Connelly et al.~\cite{Linkage} proved the
Carpenter's Rule Theorem: every such linkage can be unfolded to a
convex configuration while preserving connectivity, edge lengths, and
without self-crossing; see also
\cite{Streinu-2005,Cantarella-Demaine-Iben-O'Brien-2004} for more algorithmic
approaches.  But it remained open whether this result held when the
original configuration was self-touching (but not self-crossing).  We
solve this open problem, proving that the Carpenter's Rule Theorem
extends to self-touching polygons and polygonal chains: every such
linkage can be convexified starting from any (possibly self-touching)
configuration.

The basis for this result is a new technique for defining self-touching
configurations.  Our approach is based on the intuition that
self-touching configurations are limits of non-self-touching configurations.
This intuition may seem obvious, but in its literal form, it is false.
In the example of map folding, taking the limit to zero separation
still, in the end, discards all of the information about the folding.
Nonetheless, when working with self-touching configurations, people draw
configurations with overlapping layers separated slightly for visibility,
and imagine the limit as those separations go to zero.

We show how to turn this intuitive idea into a definition.  Our main
idea is to \emph{annotate} the geometry of the configuration with
additional continuous information.  Previous combinatorial definitions
could add annotations only at places where self-touching occurs, as
needed to resolve ambiguity.  In contrast, the topology of our
configuration space places self-touching configurations near
nontouching configurations.  We therefore annotate all configurations,
independent of whether they are self-touching.  Generally, the annotations are
made up of the output of an order function applied to each ordered
pair of independently mobile parts of the object to be modeled.  For a
linkage, bars are the independently mobile parts; for paper, each
point of the paper is an independently mobile part.

\begin{wrapfigure}{r}{1.8in}
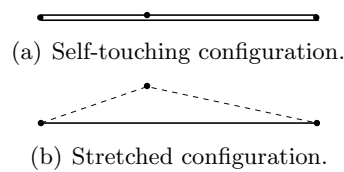

  \centering
  \mygraphsubfigure{fig:need-stretch-a}{scale=0.5}{need-stretch-newa}{Self-touching
  configuration.}
  \mygraphsubfigure{fig:need-stretch-b}{scale=0.5}{need-stretch-newc}{Stretched
  configuration.}
  \caption{This self-touching configuration of a degenerate triangle
    linkage needs some added flexibility to be a limit of nontouching
    configurations.}
  \label{fig:need-stretch}
\end{wrapfigure}

Taking limits of annotated configurations is unfortunately
insufficient to get all self-touching configurations.
Figure~\ref{fig:need-stretch-a} shows a linkage that has no
nontouching configurations.  In order to get these configurations, we
allow flexibility in the limit-taking.  That is, we allow the object
to ``stretch'' while the limit is being taken.  In the case of
linkages, stretching means varying the length of the bars
(Figure~\ref{fig:need-stretch-b}).

We believe that uniform annotation and stretchiness can be used to
define self-touching configurations for a wide range of foldable
objects.  In this paper, we use our limit-based definition to prove
the self-touching Carpenter's Rule Theorem using a primarily
topological argument.  At the end of the paper, we discuss how our
topological approach might be applied to origami, rigid origami, and
3D hinged assemblies of planar panels.

\paragraph{Outline.}
In Section~\ref{sec:linkageequivalence}, we review 2D linkages and
define $\epsilon$-related configurations of linkages.  Then, in
Section~\ref{sec:general} we present an order function designed for
linkages and apply it to self-touching linkage configurations. To
increase the credibility of this definition,
Section~\ref{sec:equivalence} proves it equivalent to the previously existing
combinatorial definition as well as a variant of our definition that allows
vertices to be split into a pair of vertices connected by a zero-length edge.
Section~\ref{sec:carpenter} then uses the new definition to prove the
self-touching Carpenter's Rule Theorem.
We generalize this universal reconfigurability result
to strictly slender polygonal adornments in Section~\ref{sec:slender}.
Finally, in Section~\ref{sec:extension} we discuss extensions of our
definition methodology to objects that cannot be modeled as 2D linkages.


\section{Linkage Preliminaries}
\label{sec:linkageequivalence}

This section introduces basic definitions that are important
throughout this paper.

\begin{definition}
A \defword{linkage} is a pair $\LL = (G, \ell)$ consisting of a graph
$G$ and a function $\ell : E(G) \to \RR_{\ge 0}$ assigning nonnegative
lengths to the edges.  We refer to the edges of $G$ as \emph{bars}.
\end{definition}

\begin{definition}
A \defword{configuration} $C$ of a linkage $\LL = (G, \ell)$ in the
plane is a map $C : \V(\LL) \to \RR^2$ obeying the length constraints,
so if $(v, w) \in E(G)$ then $|C(v) - C(w)| = \ell(v, w)$.  The set of
all such configurations is called the \emph{configuration space}
$\conf(\LL)$ of $\LL$.
\end{definition}

\begin{definition}
A \defword{nontouching configuration} of $\LL=(G, \ell)$ is a
configuration in which no two edges intersect except at endpoints, and
two endpoints coincide if and only if they are connected by a path in
$G$ of zero-length bars.  Let $\nconf(\LL) \subset \conf(\LL)$ be the
subspace of nontouching configurations.
\end{definition}

Simple linkages include \defword{open chains} and \defword{closed
chains} for which the underlying graph is a single path, or a single
loop, respectively.

\begin{wrapfigure}{r}{1.7in}
  \vspace*{-3ex}
  \includegraphics{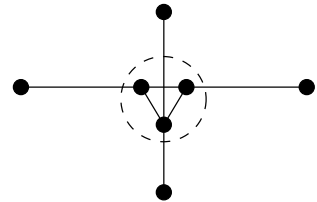}
  \vspace*{-1ex}
  \caption{This configuration, where the bars within the dotted circle
    have zero length, is nontouching, despite the apparent
    intersection between the upward-pointing edge and the topmost
    zero-length edge.}
  \label{fig:zero-length}
  \vspace*{-9ex}
\end{wrapfigure}

Our definition of linkage is unusual in that it allows zero-length
bars.  Allowing zero-length bars will be necessary for some of our
topological arguments, because without them, certain configuration
spaces are not closed.

The definition we have taken for nontouching may yield surprising
results with linkages having zero-length bars, as in
Figure~\ref{fig:zero-length}.  Our definition considers pairs of
vertices that are connected by a zero-length edge to be nontouching,
effectively merging the two endpoints of such edges into a single
vertex.

\subsection{$\bf{\epsilon}$-related Configurations}

\begin{definition}
For $\epsilon \ge 0$, we say two linkages $(G_1, \ell_1) $ and $(G_2,
\ell_2)$ are \defword{$\epsilon$-related} if $G_1=G_2$
and $|\ell_1(e)-\ell_2(e)| \le \epsilon$ for all $e \in E(G_1)$.
\end{definition}

\begin{definition}
For $\epsilon \ge 0$, an \defword{$\epsilon$-related configuration} of
a linkage $\LL$ is a configuration of a linkage $\LL'$ that is
$\epsilon$-related to $\LL$.
\end{definition}

The set of $\epsilon$-related configurations of $\LL$ is denoted by
$\conf_{\epsilon}(\LL)$.  In particular, $\conf_0(\LL)=\conf(\LL)$.
Similarly, $\nconf_{\epsilon}(\LL)$ is the set of nontouching
configurations of linkages $\epsilon$-related to $\LL$.

\subsection{Real Algebraic Geometry}
We use some results from real algebraic geometry,
In particular, all of the objects discussed in this work will
be semi-algebraic, and we use several topological properties of such sets.
For a comprehensive reference on the theory of semi-algebraic sets,
see \cite{Bochnak-Coste-Roy-1998}.
\xxx{also cite \cite{Munkres-2000}?}
We recall the definition:

\begin{definition}
A (real) \defword{semi-algebraic set} is subset of $\RR^n$ that is a
finite Boolean combination of sets of the form $\{x: f(x) * 0\}$ where
$f$ is a polynomial of $n$ variables, and $*$ is $=$ or $<$.
A function $f : \RR^m \to \RR^k$ is \defword{semi-algebraic} if its
graph $\{(x, f(x)) : x \in \RR^m\} \subseteq \RR^{m+k}$ is a
semi-algebraic set.
\end{definition}

We will use a number of important topological properties of
semi-algebraic sets.  For a comprehensive reference on the theory of
semi-algebraic sets, see \cite{Bochnak-Coste-Roy-1998}.
\xxx{also cite \cite{Munkres-2000}?}



\section{Noncrossing Configurations}
\label{sec:general}

\subsection{The Order Function}
In defining noncrossing configurations, we must allow two bars to
``overlap''.  When this happens, the geometry of the configuration
space does not have the information necessary to determine when they
cross.

We define an order function, $\ord$, that determines the relative
positions of two edges in noncrossing configurations.  Its key
property is that $\ord$ is continuous wherever the edges do not touch,
but $\ord$ has a discontinuity where two edges share a common segment.
The two different limits as the edges approach each other will encode
the relative position information.

\bigskip

\noindent
\begin{minipage}{3.7in}
\begin{definition}\parindent=1.5em

Let $e_1$ and $e_2$ be two oriented edges in the plane.  Using
coordinates where the $x$-axis is directed along $e_1$, define
$$d_{+}(e_1, e_2) = \len \{x \in e_1 \mid \exists y: y \ge 0, (x, y) \in
e_2\};$$ $$d_{-}(e_1, e_2) = \len \{x \in e_1 \mid \exists y: y \le 0,
(x, y) \in e_2\}.$$ $d_+$ can be thought of as the length of the
projection of the part of $e_2$ above $e_1$ onto the \textbf{segment}
$e_1$ (and similarly for $d_-$).  See Figure~\ref{fig:ord}.  Define
the \defword{order function} $\ord(e_1, e_2) = d_+(e_1,e_2)-d_-(e_1,
e_2)$.

\end{definition}
\end{minipage}\hfil
\begin{minipage}{2.4in}
  \makeatletter\def\@captype{figure}%
  \centering
  \includegraphics[scale=1.0]{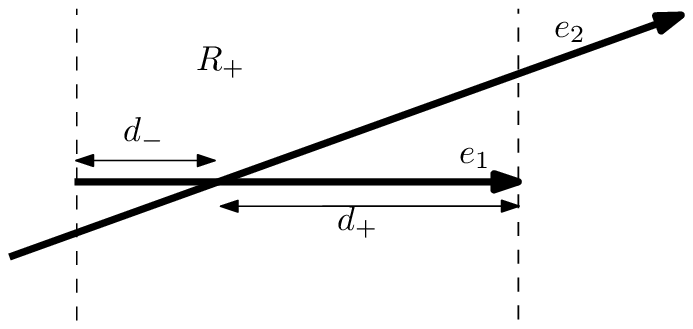}
  \caption{Defining the order function.\newline }
  \label{fig:ord}
\end{minipage}

The function $\ord$ is defined everywhere, but indeed is not continuous
when the two edges are tangent.  In particular, $\ord(e_1, e_1) = 0$,
while if $e_2$ is a slight parallel offset of $e_1$ by~$\pm \delta$,
$\ord(e_1,e_2)$ is $\pm \len(e_1)$, which is very different from~$0$.

\begin{lemma}
\label{lemma:ord-func-prop}

The order function has the following properties:

\begin{enumerate*}

\item $\ord$ is a semi-algebraic function.

\item $\ord$ is continuous over the subdomain of pairs of edges that
do not intersect in their interior.

\item Consider two sequences of oriented edges $e^n_1$ and $e^n_2$
converging to edges $e_1$ and $e_2$ respectively, such that $e^n_1$
and $e^n_2$ do not intersect in their interior for any $n$ (note that
$e_1$ and $e_2$ might intersect in their interior).  Then the sequence
$\ord(e^n_1, e^n_2)$ either converges to $\ord(e_1, e_2)$, or has at
most two accumulation points: $d$ and $-d$, where $d$ is the length of
the overlap between $e_1$ and $e_2$. \label{lembullet:limit-ord}

\end{enumerate*}

\end{lemma}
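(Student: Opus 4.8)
The plan is to treat the three items in turn; items 2 and 3 will share a single analysis of convergent sequences.

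For item 1, I would parametrize an oriented edge by its ordered pair of endpoints, so a pair of oriented edges is a point of $\RR^8$ and the graph of $\ord$ lies in $\RR^9$. Since $\ord=d_+-d_-$, it suffices to see that $d_+$ and $d_-$ are semi-algebraic. In coordinates aligned with $e_1$, the set $\{x\in e_1:\exists y\ge 0,\ (x,y)\in e_2\}$ is the extent of $e_1$ intersected with the orthogonal projection onto the line $L_1$ through $e_1$ of the part of $e_2$ lying weakly above $L_1$; because $e_2$ is a segment this is an interval (or empty) whose endpoints are, in finitely many sign cases, maxima and minima of affine functions of the eight coordinates. Hence $d_+$ (its length) is semi-algebraic, and likewise $d_-$ and $\ord$. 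Equivalently, one writes down a first-order formula over $\RR$ for the graph and invokes Tarski--Seidenberg.

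For items 2 and 3, I would prove one statement: suppose $(e_1^n,e_2^n)\to(e_1,e_2)$ with $e_1^n$ and $e_2^n$ not meeting in their interiors for every $n$, and write $d_\pm^n=d_\pm(e_1^n,e_2^n)$ and $\ord^n=\ord(e_1^n,e_2^n)$. First, $d_+^n+d_-^n$ converges to the length $\lambda$ of $\pi_{L_1}(e_2)\cap e_1$: the two sets computing $d_+^n$ and $d_-^n$ together equal $\{x\in e_1^n:\exists y,\ (x,y)\in e_2^n\}=\pi_{L_1^n}(e_2^n)\cap e_1^n$ and overlap only on a null set (at most the single point where $e_2^n$ meets $L_1^n$), and length is continuous on intervals. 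Next, exactly one of the following holds: (a) $e_1$ and $e_2$ also do not meet in their interiors, or (b) they do. Case (a) is the hypothesis of item 2; in case (b), the stability of transversal crossings (a transversal interior crossing of $e_1,e_2$ would force the $e_i^n$ to cross in their interiors for large $n$) forces $e_1$ and $e_2$ to be collinear and to overlap in a segment of positive length $d$, and then $\ord(e_1,e_2)=0$.

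In case (a): the endpoints of the two intervals computing $d_\pm^n$ depend continuously on the data except when the $e_1$-frame height of $e_2$ over its projection vanishes identically, i.e. $e_2\subseteq L_1$; but then $e_1$ and $e_2$ are collinear with disjoint interiors, so $\len(e_1\cap e_2)=0$. Away from that degenerate collinear situation (and when $e_1$ or $e_2$ is degenerate) we get $d_\pm^n\to d_\pm(e_1,e_2)$ and hence $\ord^n\to\ord(e_1,e_2)$, while in the degenerate collinear situation the squeeze $0\le d_\pm^n\le\len(\pi_{L_1^n}(e_2^n)\cap e_1^n)\to 0$ gives $\ord^n\to 0=\ord(e_1,e_2)$; this proves item 2. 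In case (b), for large $n$ we have $e_2^n\not\subseteq L_1^n$ (otherwise $e_1^n\cap e_2^n$ would be a positive-length segment), so $e_2^n$ meets $L_1^n$ in a single point $q^n$; a short point-set argument shows $q^n$ cannot lie interior to the shadow overlap $R^n:=\pi_{L_1^n}(e_2^n)\cap e_1^n$ without being interior to both $e_1^n$ and $e_2^n$, which is forbidden. Hence the ($e_1^n$-frame) height of $e_2^n$ has constant sign over the interior of $R^n$, so one of $d_\pm^n$ vanishes and the other equals $\len(R^n)\to d$; thus $\ord^n=\pm\len(R^n)$, and such a sequence either converges to $d$, converges to $-d$, or has exactly the two accumulation points $d$ and $-d$ --- which is item 3 (note $\lambda=d$ here). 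This constant-sign claim is the step I expect to be the real obstacle, and it is precisely where the hypothesis that every pair $e_1^n,e_2^n$ has disjoint interiors is indispensable: otherwise one could tilt $e_2^n$ so that it crosses $L_1^n$ in the middle of the overlap and drive $\ord^n$ to any prescribed value in $[-d,d]$.
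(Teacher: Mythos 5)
Your proposal is correct and follows essentially the same route as the paper: decompose $\ord = d_+ - d_-$, work in the frame aligned with $e_1$, get semi-algebraicity by a finite case analysis (or quantifier elimination), establish continuity away from the degenerate collinear situation, and for part 3 use the disjoint-interiors hypothesis to force the height of $e_2^n$ to have constant sign over the overlap, so that $\ord(e_1^n,e_2^n) = \pm\len(R^n) \to \pm d$. The only real difference is one of detail: the paper writes an explicit formula $d_+=\left|f(x_2/l)-f(x_1/l)\right|l$ and states the part-3 sign claim in one line, whereas you spell out the constant-sign argument that the paper leaves implicit.
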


\ifabstract
A proof of Lemma~\ref{lemma:ord-func-prop} can be found in
Appendix~\ref{algebraic appendix}.
\else
\begin{proof}
We prove that $d_+$ is a semi-algebraic function that is
continuous over edges that do not intersect in their interior;
then $d_-$ has these properties by a similar argument.  We use the
reference frame centered at the first vertex of $e_1$, with the $x$
axis directed along $e_1$.  In this reference frame, edge $e_1$
extends from $(0,0)$ to $(l,0)$.

Now, replace $e_2$ with its intersection with the region $R_+ =
\{(x,y) \mid y \ge 0, l \ge x \ge 0\}$.  If the intersection is empty
(a semi-algebraic condition), then $d_+=0$.  Otherwise, the
coordinates $(x_1, y_1)$ and $(x_2, y_2)$ are a semi-algebraic
function of the endpoints of $e_2 \cap R_+$, because whether $e_2$
intersects each boundary of $R_+$ can be determined using line-segment
intersections and thus can be tested with a boolean combination of
polynomial inequalities.  For each possible set of intersections, the
points of intersection can be semi-algebraically computed from the
original coordinates.

Clearly, $d_+=0$ if $l=0$.  The reader can check that
$d_+=\left|f(x_2/l)-f(x_1/l)\right|l$, where $f(z)=\max(\min(z,1),0)$.
$f$ is a boolean combination of polynomial inequalities, and thus
$d_+$ is a semi-algebraic function everywhere.

Because $d_+$ is uniformly zero when $e_2$ does not intersect $R_+$,
$d_+$ is continuous when $e_2 \cap R_+$ is empty.  Notice that our
formula for $d_+$ in terms of $f$ would still be true if we had
replaced $e_2$ with its intersection with the closed upper half plane
$H$ instead of $R_+$.  Because $f$ is continuous, $d_+$ is continuous
everywhere where $e_2 \cap H$ is a continuous function of the
coordinates of $e_2$.  Similarly, $d_+$ is continuous everywhere where
$e_2 \cap R_+$ is a continuous function of the coordinates of $e_2$.
Thus, $d_+$ is continuous except when both endpoints of $e_2 \cap R_+$
lie along the boundary of $H$ intersect the boundary of $R_+$.  This
exceptional case occurs only if $e_2$ and $e_1$ intersect in the
interior.  Thus $d_+$ is continuous everywhere except when $e_2$
intersects $e_1$ in the interior.

Now, consider sequences $e_1^n$ and $e_2^n$ defined in the statement
of Part~\ref{lembullet:limit-ord} of this lemma.  By continuity,
$\ord(e_1^n, e_2^n)$ converges to $\ord(e_1, e_2)$, unless the limits
$e_1$ and $e_2$ share a common interval.  If they do, notice that
$e_1^n$ and $e_2^n$ are nontouching, and so for sufficiently large
$n$, $\ord(e_1^n, e_2^n)$ must be within $\epsilon$ of either $l$ or
$-l$, where $l$ is the length of that common interval.  The result
follows.\end{proof}
\fi

\subsection{Annotations}

An annotated configuration is a configuration augmented with the
values of $\ord$ on each pair of edges in the configuration.

\begin{definition}
Let $C$ be a configuration or $\epsilon$-related configuration of a
linkage $\LL$ with graph $G=(V, E)$.  For $e_i = (u,v) \in E$, write
$C(e_i) = (C(u), C(v))$.

Let $Annot_{\LL}: \RR^{|V|} \rightarrow \RR^{|V|} \times
\RR^{|E|\times|E|}$ be defined by $\annot{C}{\LL} = (C, A)$, where
$A_{i,j} = \ord(C(e_i),C(e_j))$.\footnote{We assume that the edges
have been assigned some canonical orientation.}  In this case we call
$A$ the \emph{annotation} of $C$ and the pair $(C, A)$ an
\emph{annotated configuration}.
\end{definition}


\begin{definition}
The set of \defword{annotated nontouching configurations} is
$\annot{\nconf(\LL)}{\LL}$.  For $\epsilon \ge 0$, the set of
\defword{annotated nontouching $\epsilon$-related configurations} is
$\annot{\nconf_\epsilon(\LL)}{\LL}$.
\end{definition}

\begin{lemma}
\label{lemma:annot-func-prop}
The annotation function $Annot_{\LL}$ is injective, continuous over
nontouching configurations, and semi-algebraic.
\end{lemma}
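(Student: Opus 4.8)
The plan is to check the three asserted properties one at a time; each reduces to the structure of $Annot_{\LL}$ as a tuple of simple maps, combined with the matching clause of Lemma~\ref{lemma:ord-func-prop}. \emph{Injectivity} comes for free and needs no hypothesis on $C$: since $\annot{C}{\LL}=(C,A)$ records $C$ verbatim as its first coordinate, two configurations with the same annotated configuration are carried by the projection onto the first factor to the same point, hence are equal.

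For \emph{continuity over the nontouching configurations}, I would write $Annot_{\LL}$ as the tuple whose coordinates are the identity $C\mapsto C$ and, for each ordered pair of edges $e_i,e_j$, the map $C\mapsto A_{i,j}=\ord(C(e_i),C(e_j))$; a tuple is continuous exactly when each coordinate is, so it suffices to treat a single $A_{i,j}$. Each $A_{i,j}$ factors as the evaluation $C\mapsto(C(e_i),C(e_j))$, which is linear and therefore continuous, followed by $\ord$. When $i=j$ the composition is identically $0$. When $i\neq j$, at any nontouching configuration the segments $C(e_i)$ and $C(e_j)$ meet only at shared endpoints, hence never in their relative interiors, and the same holds on a neighborhood within $\nconf(\LL)$ on which both edges stay nondegenerate; so there the evaluation map lands in the subdomain where Lemma~\ref{lemma:ord-func-prop}(2) makes $\ord$ continuous, and the composition is continuous. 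The remaining case---that $C(e_i)$ or $C(e_j)$ degenerates to a point, as can happen near a zero-length bar (cf.\ Figure~\ref{fig:zero-length})---is handled by the bound $|\ord(C(e_i),C(e_j))|\le\min\{\len(C(e_i)),\len(C(e_j))\}$, which forces $A_{i,j}\to 0$ as such a configuration is approached, in agreement with the value $0$ there. Hence $Annot_{\LL}$ restricted to $\nconf(\LL)$ is continuous.

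For \emph{semi-algebraicity} I would argue coordinatewise once more: the identity map is semi-algebraic, the evaluation maps $C\mapsto(C(e_i),C(e_j))$ are polynomial and hence semi-algebraic, and $\ord$ is semi-algebraic by Lemma~\ref{lemma:ord-func-prop}(1). Semi-algebraic maps are closed under composition and under forming finite tuples (see~\cite{Bochnak-Coste-Roy-1998}), so each $A_{i,j}$ is semi-algebraic, and therefore so is $Annot_{\LL}$.

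The only step that calls for any real care is the continuity argument, and there the single fact to pin down is that the nontouching condition---including its deliberately permissive treatment of zero-length bars---genuinely rules out two distinct edges overlapping along an interval, so that Lemma~\ref{lemma:ord-func-prop}(2) applies to every ordered pair of edges at once (with the degenerate-edge corner case absorbed by the length bound above); everything else is bookkeeping.
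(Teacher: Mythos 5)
Your proof is correct and takes essentially the same route as the paper's: injectivity from the first coordinate being $C$ itself, and continuity plus semi-algebraicity by reducing each entry $A_{i,j}$ to $\ord$ and invoking Lemma~\ref{lemma:ord-func-prop}, using that in nontouching configurations distinct edges never meet in their interiors. You merely spell out details the paper leaves implicit (the diagonal entries $A_{i,i}\equiv 0$, zero-length bars, and closure of semi-algebraic maps under composition and tupling), all of which are handled correctly.
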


\begin{proof}
The annotation function is injective, because the first component of
$\annot{C}{\LL}$ is $C$.  Because the annotation function simply applies
$\ord$ to all pairs of edges in $G$, and by definition, in nontouching
configurations no two edges intersect in their interior, the remaining
properties follow directly from Lemma~\ref{lemma:ord-func-prop}.
\end{proof}

\subsection{Noncrossing Configurations}
We are now ready to define noncrossing configurations in terms of
limits of nontouching configurations.

\begin{definition} \label{def:igconf}
A \defword{noncrossing configuration} of $\LL$ is an element of
$\conf_0(\LL) \times \RR^{|E(\LL)|\times|E(\LL)|}$ that is the limit
of a sequence of annotated nontouching configurations of linkages
$1$-related to $\LL$.  The space of noncrossing configurations of
$\LL$ is denoted $\igconf{0}{\LL}$.

Equivalently, $\igconf{0}{\LL} = (\conf_0(\LL) \times \RR^{|E(\LL)
  \times E(\LL)|}) \cap \overline{\asconf{1}{\LL}}$, where
  $\overline{X}$ denotes the topological closure of $X$.

\end{definition}

The following characterization implies that replacing $1$ with any
$\epsilon > 0$ in Definition \ref{def:igconf} would define the same
set.


\begin{lemma}\label{lem:igconf-property}$\igconf{0}{\LL} = \cap_{n=1}^\infty \overline{\annot{\nconf_{1/n}(\LL)}{\LL}}$.
\end{lemma}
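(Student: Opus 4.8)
The plan is to prove the two inclusions separately, using the explicit closure description of $\igconf{0}{\LL}$ from Definition~\ref{def:igconf}.

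For the inclusion $\igconf{0}{\LL} \subseteq \cap_{n=1}^\infty \overline{\annot{\nconf_{1/n}(\LL)}{\LL}}$, first I would observe that $\igconf{0}{\LL} \subseteq \overline{\asconf{1}{\LL}} = \overline{\annot{\nconf_{1}(\LL)}{\LL}}$. The key point is that this argument is not special to the radius $1$: any point of $\igconf{0}{\LL}$ lies in $\conf_0(\LL)\times\RR^{|E|\times|E|}$, and if a point $(C,A)$ with $C \in \conf_0(\LL)$ is a limit of annotated nontouching configurations of linkages $1$-related to $\LL$, I want to argue it is in fact a limit of such configurations of linkages $(1/n)$-related to $\LL$. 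Here I would use that $C$ itself has edge lengths exactly equal to those of $\LL$ (since $C\in\conf_0(\LL)$): a convergent sequence of $1$-related nontouching configurations whose limit has the exact lengths can be rescaled bar-by-bar so that the perturbed lengths shrink to within $1/n$ of $\ell$ without moving the limit, and the order-function annotations are unaffected in the limit by Lemma~\ref{lemma:ord-func-prop} (continuity over nontouching edges plus the controlled two-accumulation-point behavior). Thus $(C,A)\in\overline{\annot{\nconf_{1/n}(\LL)}{\LL}}$ for every $n$.

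For the reverse inclusion, suppose $(C,A)\in\overline{\annot{\nconf_{1/n}(\LL)}{\LL}}$ for every $n$. Since $\nconf_{1/n}(\LL)\subseteq\nconf_1(\LL)$ for $n\ge 1$, in particular $(C,A)\in\overline{\asconf{1}{\LL}}$. It remains to show $C\in\conf_0(\LL)$, i.e.\ $C$ has the exact bar lengths of $\LL$. This follows because for each $n$ there is a nontouching configuration $C_n$ of a linkage $(1/n)$-related to $\LL$ with $(C_n,A_n)$ within $1/n$ of $(C,A)$; then $|C_n(u)-C_n(v)|$ is within $1/n$ of $\ell(u,v)$ for each edge $(u,v)$, and $C_n\to C$ gives $|C(u)-C(v)|=\ell(u,v)$. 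Hence $(C,A)\in(\conf_0(\LL)\times\RR^{|E|\times|E|})\cap\overline{\asconf{1}{\LL}}=\igconf{0}{\LL}$.

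The main obstacle I anticipate is the rescaling step in the first inclusion: one must verify that perturbing the bar lengths of the approximating linkages to bring them closer to $\ell$ can be done while keeping the configurations nontouching and without disturbing convergence of both the geometric part and the annotation part. Continuity of the rescaling map on configuration space handles the geometric convergence, and Lemma~\ref{lemma:ord-func-prop} handles the annotations provided the rescaled configurations remain nontouching — which holds because a sufficiently small rescaling of a nontouching configuration (away from the degenerate zero-length cases, which are controlled by the definition of $\nconf$) stays nontouching. Making this ``sufficiently small'' uniform enough to extract the desired diagonal subsequence is the one place that needs care; everything else is bookkeeping with closures and the already-established properties of $\ord$ and $Annot_{\LL}$.
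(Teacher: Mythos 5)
Your reverse inclusion ($\cap_{n=1}^\infty \overline{\annot{\nconf_{1/n}(\LL)}{\LL}} \subseteq \igconf{0}{\LL}$) is correct and is essentially the paper's own argument: membership in every $\overline{\annot{\nconf_{1/n}(\LL)}{\LL}}$ forces each bar length to be within $1/n$ of $\ell(e)$ for all $n$, hence exactly $\ell(e)$, and containment in $\overline{\asconf{1}{\LL}}$ is immediate from $\nconf_{1/n}(\LL) \subseteq \nconf_1(\LL)$.

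The forward inclusion is where you have a gap, and it is a self-inflicted one. The bar-by-bar rescaling you propose is both unjustified and unnecessary. Unjustified: for a linkage whose graph has cycles you cannot adjust individual bar lengths of a given configuration independently (the vertices are shared), and even where you could, you would still owe a proof that the modified configurations remain nontouching and that both the geometric and the annotation coordinates still converge to $(C,A)$ --- exactly the point you flag as ``needs care'' and never resolve. Unnecessary: since $(C,A)$ is by definition the limit of a sequence $\alpha_k \in \annot{\nconf_1(\LL)}{\LL}$ and $C \in \conf_0(\LL)$, the vertex coordinates of $\alpha_k$ converge to those of $C$, hence each bar length in $\alpha_k$ converges to the exact length $\ell(e)$; so for every $n$ there is an $N_0(n)$ such that for all $k > N_0(n)$ the configuration $\alpha_k$ is already a nontouching configuration of a linkage $1/n$-related to $\LL$, i.e.\ $\alpha_k \in \annot{\nconf_{1/n}(\LL)}{\LL}$. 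The limit $(C,A)$ therefore lies in the closed set $\overline{\annot{\nconf_{1/n}(\LL)}{\LL}}$ for every $n$, with no modification of the approximating sequence at all. This tail argument is precisely how the paper proves the forward inclusion; replacing your rescaling step with it closes the gap and shortens the proof.
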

\begin{proof}
Clearly, $\cap_{n=1}^\infty \overline{\annot{\nconf_{1/n}(\LL)}{\LL}}
\subset \overline{\asconf{1}{\LL}}$.  To see that $\cap_{n=1}^\infty
\overline{\annot{\nconf_{1/n}(\LL)}{\LL}} \subset \conf_0(\LL) \times
\RR^{|E(\LL)|\times|E(\LL)|}$, notice that the length of any bar $e$
in the left-hand side must differ from $\ell(e)$, the length of that
bar in $\LL$, by at most $1/n$ for all $n$, and thus must equal
$\ell(e)$, so that we in fact have an annotated configuration of
$\LL$.

Conversely, if $x \in \conf_0(\LL)$ is the limit of the sequence
$\alpha_1, \alpha_2, \ldots$, $\alpha_i \in
\annot{\nconf_1(\LL)}{\LL}$, then for any $n$, there exists $N_0(n)$
such that for any $k > N_0(n)$, $\alpha_k$ is an annotated
$1/n$-related configuration of $\LL$, i.e. $\alpha_k \in
\overline{\annot{\nconf_{1/n}(\LL)}{\LL}}$.  Because that set is closed,
the limit $x$ must also be in
$\overline{\annot{\nconf_{1/n}(\LL)}{\LL}}$.  Thus $\igconf{0}{\LL}
\subset \cap_{n=1}^\infty \overline{\annot{\nconf_{1/n}(\LL)}{\LL}}$,
as desired.
\end{proof}

\subsection{Semi-Algebraic}

\begin{theorem}
\label{th:conf-semialgebraic}

For $\epsilon \ge 0$ the following sets are semi-algebraic:
$\conf_{\epsilon}(\LL)$, $\nconf_{\epsilon}(\LL)$,
$\asconf{\epsilon}{\LL}$, $\igconf{\epsilon}{\LL}$.
\end{theorem}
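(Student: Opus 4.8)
The plan is to treat the four sets in increasing order of difficulty, in each case either writing down an explicit finite Boolean combination of polynomial conditions or reducing to an already-handled set via the standard closure properties of the class of semi-algebraic sets (finite unions, intersections, complements, Cartesian products, images under semi-algebraic maps, and topological closure), all of which are collected in \cite{Bochnak-Coste-Roy-1998}.

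For $\conf_\epsilon(\LL)$ I would simply transcribe the definition: $C$ lies in it exactly when, for every edge $e=(u,v)$, the distance $|C(u)-C(v)|$ differs from $\ell(e)$ by at most $\epsilon$. Since every quantity here is nonnegative, squaring shows this is equivalent to $|C(u)-C(v)|^2\le(\ell(e)+\epsilon)^2$ conjoined with $|C(u)-C(v)|^2\ge(\ell(e)-\epsilon)^2$ (this last conjunct being simply dropped for edges with $\ell(e)\le\epsilon$). Each conjunct is a polynomial inequality in the coordinates of $C$, and $\conf_\epsilon(\LL)$ is a finite intersection of such, hence semi-algebraic.

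For $\nconf_\epsilon(\LL)$ I would intersect $\conf_\epsilon(\LL)$ with the two clauses defining ``nontouching''. The clause ``no two edges meet except at common endpoints'' says, for each ordered pair of distinct edges $e_i=(u_i,v_i)$ and $e_j=(u_j,v_j)$, that no point of $\RR^2$ lies on the image of $e_i$ and on the image of $e_j$ unless it is an endpoint of each; this is a first-order formula over the reals in the coordinates of $C$, hence defines a semi-algebraic set by Tarski--Seidenberg quantifier elimination (alternatively, the intersection point can be eliminated by a routine case analysis). The clause ``two vertices coincide if and only if they are joined by a path of zero-length bars'' is the delicate one, because the $\epsilon$-related linkage realized by $C$ is forced to have $\ell'(u,v)=|C(u)-C(v)|$, so which bars have length zero depends on $C$. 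Here I would stratify: write $\nconf_\epsilon(\LL)$ as the finite union, over subsets $S\subseteq E(G)$, of its intersection with the semi-algebraic piece on which exactly the edges of $S$ are collapsed (cut out by $C(u)=C(v)$ for $(u,v)\in S$ and $C(u)\ne C(v)$ otherwise). On that piece the zero-length subgraph is the fixed graph $(\V(\LL),S)$; the ``if'' direction then holds automatically, and the ``only if'' direction becomes the requirement that $C(u)\ne C(w)$ whenever $u$ and $w$ lie in distinct connected components of $(\V(\LL),S)$ --- once more a finite Boolean combination of polynomial conditions. Intersecting everything gives a semi-algebraic description of $\nconf_\epsilon(\LL)$.

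The remaining two sets follow quickly. $\asconf{\epsilon}{\LL}=\annot{\nconf_\epsilon(\LL)}{\LL}$ is the image of the semi-algebraic set $\nconf_\epsilon(\LL)$ under the map $\annot{\cdot}{\LL}$, which is semi-algebraic by Lemma~\ref{lemma:annot-func-prop}, and the image of a semi-algebraic set under a semi-algebraic map is semi-algebraic. And Definition~\ref{def:igconf} exhibits $\igconf{0}{\LL}$ as $(\conf_0(\LL)\times\RR^{|E(\LL)|\times|E(\LL)|})\cap\overline{\asconf{1}{\LL}}$, a finite intersection of a Cartesian product of semi-algebraic sets with the closure of a semi-algebraic set, hence semi-algebraic (the same computation with $1$ replaced by $1+\epsilon$ handles a version with a positive parameter, if that is what $\igconf{\epsilon}{\LL}$ denotes). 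The main obstacle is the $\nconf_\epsilon$ step, and within it the biconditional linking vertex coincidence to zero-length-bar paths: it cannot be written down directly because the relevant subgraph is not fixed but determined by the configuration, and the stratification by the collapsed-edge set $S$ is the device that resolves it; the only other point that needs a word of care is the existential quantifier over the intersection point in the noncrossing clause, which is removed by quantifier elimination.
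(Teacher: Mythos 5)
Your proposal is correct, and for three of the four sets it is essentially the paper's argument: the squared length inequalities for $\conf_\epsilon(\LL)$, the image of $\nconf_\epsilon(\LL)$ under the semi-algebraic annotation map of Lemma~\ref{lemma:annot-func-prop} for $\asconf{\epsilon}{\LL}$, and the intersection of a semi-algebraic set with the topological closure of a semi-algebraic set for $\igconf{0}{\LL}$. Where you genuinely differ is the step that carries the content, $\nconf_\epsilon(\LL)$. The paper builds a quantifier-free description by hand: it reuses the same-side condition of Equation~(3.6) of~\cite{InfinitesimallyLocked_LasVegas} with the inequalities made strict, detects that two vertices are joined by a zero-length path via the single polynomial equation $\sum_{j} (x_{i_j}-x_{i_{j+1}})^2+(y_{i_j}-y_{i_{j+1}})^2=0$, and then relaxes the strict nontouching condition exactly for endpoint pairs satisfying such an equation (with the extra proviso that the other vertex of each bar avoid the other bar, or a bar have zero length). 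You instead invoke Tarski--Seidenberg to dispose of the existential quantifier in the no-crossing clause, and you stratify by the subset $S$ of collapsed edges so that on each stratum the zero-length subgraph is combinatorially fixed, turning the coincidence biconditional into a finite Boolean combination of equalities and inequalities over the components of $(\V(\LL),S)$. Both routes are sound: yours is shorter and more conceptual, and the stratification makes explicit a point the paper leaves implicit, namely that which bars are collapsed is determined by the configuration rather than by $\LL$; the paper's approach buys an explicit system of polynomial constraints, which is in the spirit of its reliance on the machinery of~\cite{InfinitesimallyLocked_LasVegas}. Your closing remark on reading $\igconf{\epsilon}{\LL}$ for $\epsilon>0$ is also fair, since the paper only defines the $\epsilon=0$ case and its own proof treats the general case equally loosely.
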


\ifabstract
A proof of Theorem~\ref{th:conf-semialgebraic} can be found in
Appendix~\ref{algebraic appendix}.
\else
\begin{proof}
We start with $\conf_{\epsilon}(\LL)$, which is defined by requiring
each bar length to be within $\epsilon$ of its length in $\LL$.  For a
bar between points $(x_i, y_i)$ and $(x_j, y_j)$, with a length $l_k$
in $\LL$, we have the following constraints:
\begin{eqnarray*}
(x_i-x_j)^2 + (y_i-y_j)^2 \le & (l_k+\epsilon)^2, &\\ (x_i-x_j)^2 +
(y_i-y_j)^2 \ge & (l_k-\epsilon)^2 & \textrm{ if $l_k \ge \epsilon$.}
\end{eqnarray*}
\noindent Because these conditions are semi-algebraic,
$\conf_{\epsilon}(\LL)$ is semi-algebraic.


$\nconf_{\epsilon}(\LL)$ is defined like $\conf_{\epsilon}(\LL)$, but
with additional nontouching constraints.  We re-use a strategy
presented in Equation~(3.6) of~\cite{InfinitesimallyLocked_LasVegas},
based on the following idea: if two bars do not intersect, then one of
the bars lies completely on one side of the other bar, i.e., both ends
of the first bar are on the same side of the other bar. The condition
in~\cite{InfinitesimallyLocked_LasVegas} has to be slightly changed by
making the inequalities in it strict, to prevent self-touching in
addition to self-intersection.

Unfortunately, this condition is too strong, as it prevents bars from touching at
their endpoints when the distance in the graph between the endpoints is
zero. If there is a path $v_{i_0}, \dots, v_{i_n}$ in the graph between two
vertices $v_{i_0}$ and $v_{i_n}$, then we can test that the distance
between them in the graph is zero using the equation
$$ \sum_{j=0}^{n-1} (x_{i_j}-x_{i_{j+1}})^2 + (y_{i_j}-y_{i_{j+1}})^2 = 0.$$
When this condition holds, we augment the strict nontouching condition by
explicitly allowing $v_{i_0}$ and $v_{i_n}$ to touch. We allow this as
long as the other vertex of each bar does not touch the other bar, or one
of the bars has zero length. All these conditions can be expressed using
polynomials.

Combining all these polynomial conditions, we find that
$\nconf_{\epsilon}(\LL)$ is semi-algebraic.  It follows that
$\asconf{\epsilon}{\LL}$ is semi-algebraic, because it is the image of
the semi-algebraic set $\nconf_{\epsilon}(\LL)$ under the
semi-algebraic annotation map (by Lemma \ref{lemma:annot-func-prop}).
Similarly, $\annot{\conf_{\epsilon}(\LL)}{\LL}$ is semi-algebraic.


Finally, $\igconf{\epsilon}{\LL}$ is the intersection of the
topological closure of the semi-algebraic set $\asconf{1}{\LL}$ with
the semi-algebraic set $\annot{\conf_{0}(\LL)}{\LL}$, and is thus
semi-algebraic.
\end{proof}
\fi

\section{Equivalent Definitions}\label{sec:equivalence}

In this section, we describe two definitions of noncrossing
configurations that are equivalent to our definition.  Section
\ref{ssec:reduced} describes a variation on our definition which is
useful for analyzing the linkages with very short edges.

Section \ref{sec:comb-equivalence} describes a translation of the
combinatorial definition of~\cite{InfinitesimallyLocked_LasVegas} into
our terminology, and proves that all three definitions are equivalent.

\subsection{Extended Linkages}\label{ssec:reduced}

In this section, we introduce \emph{extended linkages} and
\emph{extended noncrossing configurations}, and use them to define a
natural and seemingly larger class of noncrossing configurations.  We
use extended linkages as a tool in our proof that our noncrossing
configurations are the same as the noncrossing configurations
of~\cite{InfinitesimallyLocked_LasVegas}, which we will refer to as
combinatorial noncrossing configurations.

When defining noncrossing configurations as a limit of nontouching
configurations, we could allow vertices to be split into two vertices
with an edge of length at most $\epsilon$ between them.  In the limit
as $\epsilon \rightarrow 0$, the extra edges have length 0, and we
remerge their endpoints, so that the resulting configuration is
naturally an element of $\annot{\conf_{\epsilon}(\LL)}{\LL}$.  Linkage
extensions and reductions formalize this notion of splitting vertices
into two vertices separated by a zero-length edge.  A simple example
is shown in Figure~\ref{fig:extended-linkage-example}.

\begin{wrapfigure}{r}{1.8in}
  \centering
  \scalebox{0.5}{\includegraphics{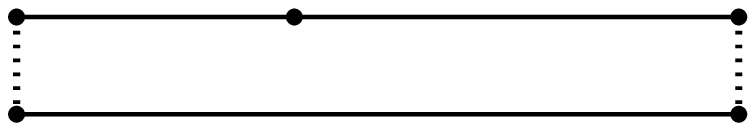}}
  \caption{Extended Linkage version of Figure \ref{fig:need-stretch}}
  \label{fig:extended-linkage-example}
  \vspace*{-2ex}
\end{wrapfigure}

This formulation is useful for analyzing linkages that have very short
edges, by understanding their self-touching limit.  This technique is
can be used to simplify a proof that there exists a locked orthogonal
tree~\cite{Charlton-Demaine-Demaine-Price-Tu-2008}.  In particular, the
orthogonal tree has horizontal edges of length at most $\epsilon$, but it is
easiest to argue it is rigid in the limiting case $\epsilon=0$
(where it has some zero-length bars), and then conclude that it is locked
for some small $\epsilon$ using the equivalence between extended
configurations and combinatorial configurations along with results
of~\cite{InfinitesimallyLocked_LasVegas}.


\begin{definition}
Given a linkage $\LL$ with an edge $e=(u,v)$ of length 0, we can
construct a \defword{single-step reduced linkage} in which $e$ has
been removed and $u$ and $v$ have been merged into a single vertex
(this may create multiple copies of some edges, but this is not a
functional change because those edges were already in the same location
before the two vertices were merged). A \defword{reduction} of a
linkage $\LL$ is the result of zero or more single-step reductions
starting from $\LL$.  An \defword{extension} of $\LL$ is a linkage
$\LL'$ of which $\LL$ is a reduction.
\end{definition}

One useful way to extend a linkage is to replace each vertex with one
vertex per incident edge, all connected together by 0-length bars.

Given an annotated configuration $C$ of a linkage $\LL$, a
configuration of the extended linkage $\LL'$ can be generated by
placing the fragments of a newly split vertex $v$ where $v$ was
before, and setting the annotations of the new zero-length edge with
other edges to 0.  In every configuration of $\LL'$, the two fragments
must be in the same location, so this defines a correspondence between
$\annot{\conf(\LL)}{\LL}$ and $\annot{\conf(\LL')}{\LL'}$.  However,
the $\epsilon$-related configurations of $\LL'$ are a larger class
than those of $\LL$.



\begin{definition}
An \emph{extended noncrossing configuration} of $\LL$ is the reduction
to $\LL$ of a noncrossing configuration of some extension $\LL'$ of
$\LL$.  The set of extended noncrossing configurations is denoted
$\gconf{0}{\LL}$.
\end{definition}

\subsection{Combinatorial Characterization of a Noncrossing Configuration}\label{sec:comb-equivalence}

\begin{wrapfigure}{r}{2.1in}
  \vspace{-3ex}
  \mygraphsubfigure{fig:combinatorial-def-1}{scale=0.5}{trianglec}{Combinatorial
  noncrossing configuration.}
  \mygraphsubfigure{}{scale=0.5}{triangleb}{Nearby $\epsilon$-related
  nontouching configuration.  A noncrossing configuration is a limit
  of these as $\epsilon \rightarrow 0$.} \centering
  \caption{Combinatorial noncrossing configuration example.}
  \label{fig:combinatorial-def}
  \vspace*{-2ex}
\end{wrapfigure}

So far we have defined a noncrossing configuration of a linkage as an
annotated configuration which is a limit of annotated nontouching
configurations.  This topological definition is markedly different
from combinatorial definition given
in~\cite{InfinitesimallyLocked_LasVegas}.  We shall now see that the
annotated configurations which are nontouching can be characterized
combinatorially as well as topologically.  The combinatorial
characterization will also allow us to see that in fact,
$\gconf{0}{\LL}=\igconf{0}{\LL}$, so extensions are not necessary to
generate all nontouching linkage configurations.

We now present a combinatorial definition of noncrossing. This
definition is expressed as constraints on configurations with an
annotation matrix $(C,A) \in \conf_0(\LL) \times \RR^{E(\LL) \times
E(\LL)}$.  These constraints are equivalent to the constraints placed
on noncrossing configurations
in~\cite{InfinitesimallyLocked_LasVegas}, however, we have translated
combinatorial configurations into our configuration space structure to
help clarify the equivalence.  We will not detail the (fairly
straightforward) correspondence between our combinatorial formulation
and theirs, though we will introduce ``corridor segments'', ``vertex
locations'', and ``magnified views'' that directly relate to the
edges, vertices, and magnified views
in~\cite{InfinitesimallyLocked_LasVegas}.

\begin{definition}

For any configuration $C$ of a linkage $\LL$, we consider a magnified
view around each \defword{vertex location} (i.e., each point at which
at least one vertex is located).  For each vertex location, define the
\defword{inbounds} at that location as follows.  There is one inbound
per nonzero-length bar that has an endpoint co-located with the vertex
location, and two inbounds per bar that goes through the vertex
location. Inbounds to a vertex location are grouped into
\defword{entrances}, the directions from which they are incident.  We
write an inbound as a pair $(\theta,e)$ where $\theta$ is the entrance
and $e$ is the edge.  Two inbounds to a vertex location are
\defword{directly connected} when there is a zero length path in $\LL$
between them, including when they are part of the same bar that passes
through the vertex location.

Figure \ref{fig:combinatorial-def} gives a simple example that will be
helpful for visualizing the situation.  The disks are the magnified
views, the directions from which lines approach the disks from outside
are the entrances, and the intersections between the lines and
boundaries of the disks are the inbounds.  Two inbounds are directly
connected if they are in the same connected component of the graphs
inside the disks.

We define a \defword{combinatorial noncrossing configuration} of a
linkage $\LL$ to be a pair $(C, A) \in \conf_0(\LL) \times \RR^{E(\LL)
\times E(\LL)}$ which satisfies the following constraints:

\begin{enumerate*}

\item \defword{Macroscopically Noncrossing:} Bars $e_i$ and $e_j$
cannot have a strict crossing (they can touch at their endpoints or
overlap over a finite length).

\item \defword{Well-Annotated:} If $i \ne j$ and bars $e_i$, $e_j$
overlap over a nonzero length $l$, then $A_{i,j}=\pm l$.  Otherwise,
set $A_{i,j} = \ord(e_i,e_j)$.\footnote{Note that the annotations in
combinatorial configurations are only meaningful for overlapping bars;
we choose $A_{i,j} = \ord(e_i,e_j)$ for nonoverlapping bars to
simplify the statement of Theorem~\ref{thm:equivalence}.}

\item \defword{Well-Ordered:} At each vertex location $v$, there is a
total ordering $\orel$ on inbounds $(\theta, e_i)$, defined by the
angle of $e_i$ out from $v$ with ties broken by the annotations.  Let
$\dir(e)$ be $+1$ if edge $e$ is directed towards $v$, and $-1$
otherwise. Then $\orel$ is defined as follows:
\begin{equation}
(\theta_i, e_i) \orel (\theta_j, e_j) \iff
\begin{array}{ll}
\theta_i > \theta_j & \textrm{when}\ \theta_i \ne \theta_j \\ A_{i,j}
\dir(e_i) \ge 0 & \textrm{when}\ \theta_i=\theta_j\\
\end{array}
\end{equation}
It follows easily from the definition of $\ord$ that $\dir(e_i)
\textrm{sign}(A_{ij}) = - \dir(e_j)\textrm{sign}(A_{ji})$, so that
this always defines a total ordering.

\item \defword{Microscopically Noncrossing:} The ordering of inbounds
around a vertex location is compatible with the direct connections
between those inbounds.  More precisely, for inbounds $t_1 \orel t_2
\orel t_3 \orel t_4$, if there are direct connections both between
$t_1$ and $t_3$, and between $t_2$ and $t_4$, then all four inbounds 
are directly connected.

\end{enumerate*}
We denote the space of combinatorial noncrossing configurations by
$\cgconf{0}{\LL}$.
\end{definition}



\begin{theorem} \label{thm:equivalence}
The sets of noncrossing configurations, extended noncrossing
configurations, and combinatorial noncrossing configurations are
identical, i.e. $\gconf{0}{\LL} = \igconf{0}{\LL} = \cgconf{0}{\LL}$.
\end{theorem}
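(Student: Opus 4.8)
The plan is to prove the three sets equal via the cycle of inclusions $\igconf{0}{\LL} \subseteq \gconf{0}{\LL} \subseteq \cgconf{0}{\LL} \subseteq \igconf{0}{\LL}$. The first inclusion is immediate, since $\LL$ is an extension of itself (using zero single-step reductions), so every noncrossing configuration of $\LL$ is already an extended noncrossing configuration. For the rest I would isolate three lemmas, stated for an arbitrary linkage: \textbf{(L1)} every element of $\igconf{0}{\LL}$ satisfies the four combinatorial constraints, so $\igconf{0}{\LL} \subseteq \cgconf{0}{\LL}$; \textbf{(L2)} every combinatorial noncrossing configuration is a limit of annotated nontouching configurations of $\epsilon$-related linkages, so $\cgconf{0}{\LL} \subseteq \igconf{0}{\LL}$; and \textbf{(L3)} if $\LL$ is a reduction of $\LL'$, then the reduction of any combinatorial noncrossing configuration of $\LL'$ is a combinatorial noncrossing configuration of $\LL$. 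Granting these, $\gconf{0}{\LL}$ consists of reductions of elements of $\igconf{0}{\LL'} = \cgconf{0}{\LL'}$ (by (L1), (L2)) over extensions $\LL'$, and each such reduction lies in $\cgconf{0}{\LL}$ by (L3), giving $\gconf{0}{\LL} \subseteq \cgconf{0}{\LL}$; the last inclusion $\cgconf{0}{\LL} \subseteq \igconf{0}{\LL}$ is exactly (L2). Closing the cycle yields all three equalities; in particular $\gconf{0}{\LL} = \igconf{0}{\LL}$, so extensions add nothing.

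For (L1), write $(C,A) = \lim_n (C_n, A_n)$ with, by Lemma~\ref{lem:igconf-property}, $C_n$ a nontouching configuration of a $(1/n)$-related linkage $\LL_n$. \emph{Macroscopically noncrossing}: a strict crossing of two bars is an open condition on the pair, so it would persist in the nearby $C_n$, contradicting that $C_n$ is nontouching. \emph{Well-annotated}: if $e_i,e_j$ overlap over length $l>0$ in $C$ they are collinear, so $\ord(e_i,e_j)=0$; by Lemma~\ref{lemma:ord-func-prop}(\ref{lembullet:limit-ord}), $A_{ij}=\lim\ord(C_n(e_i),C_n(e_j))$ is $0$ or $\pm l$, and since over the overlap the straight segment $C_n(e_j)$ must lie weakly to one side of $C_n(e_i)$ to stay nontouching, a short geometric argument rules out $0$, giving $A_{ij}=\pm l$; for bars with no positive overlap, $\ord$ is continuous there, so $A_{ij}=\ord(e_i,e_j)$. \emph{Well-ordered}: totality of $\orel$ at a vertex location only requires the sign identity for inbounds sharing an entrance, but two such inbounds come from bars overlapping near $v$ over positive length, so their annotations are $\pm l\neq 0$ and the identity survives the limit. \emph{Microscopically noncrossing}: the bars of a zero-length path of $\LL$ have length $\le 1/n$ in $\LL_n$, hence shrink to the vertex location $v$, so for large $n$ each direct-connection class near $v$ is realized by arcs of $C_n$ inside an arbitrarily small disk $D$ about $v$, meeting $\partial D$ near the entrance directions in $\orel$-order (after the annotation ties are resolved geometrically); if $t_1 \orel t_2 \orel t_3 \orel t_4$ with direct connections $t_1$--$t_3$ and $t_2$--$t_4$ but these two classes are distinct, their arc systems are disjoint subsets of $D$ whose boundary attachment points interleave on $\partial D$, which is impossible by the Jordan curve theorem, contradicting that $C_n$ is nontouching.

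Lemma (L3) is bookkeeping. A single-step reduction merges the endpoints $u,v$ of a zero-length edge $e$; in any configuration these are already co-located, so the underlying point set, the vertex locations, and the inbounds (which come only from nonzero-length or through bars, never from $e$) are unchanged, as are all edge directions and all annotations among the remaining edges. Direct-connectedness is preserved, since a zero-length path through $e$ becomes one through the merged vertex and conversely. Hence each of the four constraints transfers verbatim, and the general case follows by iterating single-step reductions.

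The main obstacle is (L2), the constructive direction. Given $(C,A)\in\cgconf{0}{\LL}$, one must produce nontouching configurations of $\epsilon_n$-related linkages, $\epsilon_n\to 0$, whose annotated versions converge to $(C,A)$; stretching bar lengths is genuinely needed, as Figure~\ref{fig:need-stretch} shows. The idea is a single coordinated small perturbation of $C$: offset each maximal overlapping bar segment from its neighbors by a tiny amount whose sign matches $A_{ij}=\pm l$, and, around each vertex location $v$, separate the co-located inbounds by tiny offsets in the order $\orel$ and reconnect them inside a small disk following the graph of $\LL$, where the extended-linkage formalism of Section~\ref{ssec:reduced} is the natural language for this local ``vertex splitting''. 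The well-ordered condition makes the local separation well-defined, and \emph{microscopically noncrossing} is precisely the hypothesis that lets the required arcs inside each disk be routed without crossings (it is the planarity obstruction of (L1) read in reverse). The delicate part is global consistency: performing every such perturbation at once with a single vanishing scale while (i) keeping the result nontouching, which I would handle by giving the perturbations a hierarchy of magnitudes (vertex locations coarsest, overlapping segments far finer, and so on) and checking no new incidences are created, and (ii) ensuring the perturbed annotations converge to $A$, which holds because for overlapping pairs $\ord$ of the separated copies tends to $\pm l$ by the choice of sign, and for all other pairs $\ord$ is continuous at $C$, so it converges to $\ord(e_i,e_j)=A_{ij}$. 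I expect verifying (i) rigorously, together with the angular-ordering bookkeeping needed in the microscopic argument of (L1), to be where most of the work lies.
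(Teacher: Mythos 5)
Your overall architecture is the same cycle the paper uses ($\igconf{0}{\LL} \subseteq \gconf{0}{\LL} \subseteq \cgconf{0}{\LL} \subseteq \igconf{0}{\LL}$), and your (L1) and (L3) together are essentially the paper's proof that extended noncrossing configurations are combinatorial ones (Lemma~\ref{lemma:equivalence-2}), with the reduction bookkeeping made explicit; those parts are sound. The genuine gap is (L2), which is exactly the step the paper does \emph{not} prove by an elementary perturbation: it invokes Theorem~3.1 of Chapter~1 of Rib\'o Mor's thesis, which the authors describe as ``quite involved,'' to get a nontouching $\delta$-perturbation of $C$ \emph{as a configuration of a linkage with the same graph as $\LL$}. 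Your sketched construction separates co-located inbounds around each vertex location and ``reconnects them inside a small disk,'' explicitly in the language of the extended-linkage formalism of Section~\ref{ssec:reduced}. But splitting a vertex into several nearby points produces a nontouching configuration of an \emph{extension} $\LL'$, not of a linkage $\epsilon$-related to $\LL$: $\epsilon$-relatedness requires $G_1=G_2$, so the new short connecting bars are not allowed, and all bars of $\LL$ incident to a common vertex must continue to share a single endpoint in the perturbation. As written, your construction therefore proves only $\cgconf{0}{\LL} \subseteq \rgconf{0}{\LL}$ (this is precisely the paper's Lemma~\ref{lemma:equivalence-1b}, offered there only as intuition), and with that weaker inclusion your cycle no longer closes: you would obtain $\gconf{0}{\LL} = \cgconf{0}{\LL} \supseteq \igconf{0}{\LL}$ but not the reverse containment, i.e.\ not the theorem.

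If instead you intend the vertex-splitting language to be purely descriptive and the perturbation to genuinely keep each vertex of $\LL$ as a single point, then (L2) becomes the real theorem: one must show that corridor offsets and the prescribed cyclic/linear orders at vertex locations can be realized simultaneously by straight bars of nearly the original lengths meeting at single perturbed vertex points, with no crossings anywhere. Your proposed ``hierarchy of magnitudes'' does not engage the actual difficulty, which is that the local consistency conditions (the four combinatorial constraints) must be shown to suffice globally --- for instance, two bars of $\LL$ sharing a vertex but overlapping along a common corridor are forced to distinct corridor offsets yet must still meet at one point, so bars cannot remain pure parallel offsets and the interactions of these deviations must be controlled across the whole configuration. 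That is the content of Rib\'o Mor's perturbation theorem. So either cite that result (as the paper does for Lemma~\ref{lemma:equivalence-1}) or supply a genuinely new argument for it; without one of these, the inclusion $\cgconf{0}{\LL} \subseteq \igconf{0}{\LL}$ --- and hence the identification of $\igconf{0}{\LL}$ with the other two sets --- is not established.
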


There are three inclusions to prove.  $\igconf{0}{\LL} \subset
\gconf{0}{\LL}$ follows directly from the definitions. \ifabstract In
her thesis, Ares Rib\'{o} Mor \cite{Ribo-2006} proved (essentially) that
$\cgconf{0}{\LL} \subset \igconf{0}{\LL}$.
Appendix~\ref{equivalence appendix} discusses this proof and
proves that $\gconf{0}{\LL} \subset \cgconf{0}{\LL}$.

\else
$\gconf{0}{\LL} \subset \cgconf{0}{\LL}$ will be proven by showing
that all the conditions in the definition are indeed met, and
$\cgconf{0}{\LL} \subset \igconf{0}{\LL}$ will be shown by
constructing a converging sequence of nontouching configurations.

\begin{definition}
A $\delta$-perturbation of a combinatorial self-touching configuration
$C$ is a nontouching configuration in which each vertex is within
$\delta$ of its location in $C$ and the relative positions of the bars
are preserved.
\end{definition}

\begin{lemma}\label{lemma:equivalence-1}
$\cgconf{0}{\LL} \subset \igconf{0}{\LL}$
\end{lemma}
\begin{proof}
Let $(C,A) \in \cgconf{0}{\LL}$.  By definition, $C \in \conf_0(\LL)$.
Consequently, it suffices to show $(C,A) \in
\overline{\annot{\nconf_{1}(\LL)}{\LL}}$.  By Theorem 3.1 of chapter 1
of Ares Rib\'{o} Mor's Thesis~\cite{Ribo-2006}, for any $C \in
\cgconf{0}{\LL}$, for any $\delta > 0$, there is a nontouching
$\delta$-perturbation $C_\delta$ of $C$.  Because a
$\delta$-perturbation changes bar lengths by at most $2\delta$,
$C_\delta \in \annot{\nconf_{2\delta}(\LL)}{\LL}$.  Because the
relative positions of the bars are preserved in a
$\delta$-perturbation, and the annotation function is continuous for
nontouching configurations, the $C_\delta$ converge to $C$ as $\delta
\rightarrow 0$.  Thus $C \in \overline{\annot{\nconf_{1}(\LL)}{\LL}}
\Rightarrow$ $C \in \igconf{0}{\LL}$, as desired.
\end{proof}

Because the argument of~\cite{Ribo-2006} is quite involved, we provide a
simpler proof that $\cgconf{0}{\LL} \subset \rgconf{0}{\LL}$ to give
some intuition for this result.  The basic strategy is to perturb the
bars within each geometric location containing bars (which we call
corridor segments) so that the bars within the corridor segment are
parallel to each other and are ordered in a consistent fashion.  We
then use the information from the magnified views to implement the
direct connections at the vertex locations.  The details follow.

\begin{definition}
A \defword{corridor} of $\LL$ is a line containing at least one bar of
$\LL$.  A \defword{corridor segment} is an interval in a corridor
which has a vertex location at each end and no other vertex locations
intersecting it.
\end{definition}

In Figure \ref{fig:combinatorial-def-1}, the segments between vertex
locations are the corridor segments, and the two corridor segments
along the bottom combine to form a single corridor.

Because our combinatorial noncrossing configuration is well-annotated,
the annotations define a total order on the bars within each corridor
segment.

\begin{lemma}\label{lemma:total-ordering}
In a combinatorial noncrossing configuration $C$, each corridor $S$
has a total ordering on its bars, that when restricted to any corridor
segment is the order determined by the annotations on that corridor
segment.
\end{lemma}

\begin{proof}
We piece together the ordering for the corridor by proceeding down the
corridor, successively merging the ordering so far with that of each
corridor segment.  At vertex location $v$, we can merge the ordering
so far with the ordering for the next corridor segment if the these
two orderings are consistent.  Because each bar exists for a contiguous
interval along the corridor, it suffices to check that the two
corridor segments of $S$ incident to $v$ have consistent orderings.
The claim that these orderings are consistent is a special case of the
microscopic noncrossing condition at $v$.
\end{proof}

\begin{lemma}\label{lemma:equivalence-1b}
$\cgconf{0}{\LL} \subset \rgconf{0}{\LL}$
\end{lemma}
\begin{proof}
Suppose $(C,A)\in \cgconf{0}{\LL}$.  We will construct a sequence of
nontouching configurations of an extension $\LL'$ of $\LL$, converging
to an extension $(C',A')$ whose reduction to $\LL$ is $(C,A)$.


$\LL'$ is constructed by splitting from each vertex of $\LL$ into one
vertex for each incident edge (with each new vertex incident with its
edge and a zero-length bar to the lexically first new vertex).
Observe that pairs of vertices directly connected in $\LL'$ are
precisely those that are directly connected in $\LL$.  We will call
the zero-length bars extension bars, and the others original bars.
Because both endpoints of each extension bar are endpoints of original
bars, specifying the locations of the original bars defines the
configuration of $\LL'$.  In $C'$, each vertex will (necessarily) lie
in the same place as the vertex of $C$ that it was split from.




Let $0<\delta < \min(1/n,\lmin,(\sin \tmin)/(2n))$ be a real number,
where $\lmin$ is the minimum bar length in $\LL$, $\tmin$ is the
minimum angle between nonparallel bars in $(C,A)$, and $n$ is the
number of bars in $\LL$.


Let $S$ be a corridor with $m$ original bars in it (we treat extension
bars as not belonging to any corridor).  By Lemma
\ref{lemma:total-ordering}, there is a total ordering on the bars in
$S$ compatible with the annotation orderings.  Thus we can assign
distinct offsets $\psi(e) \in \{0, 1, \ldots, m-1\}$ to the bars in
$S$ in a way compatible with the annotation orderings.

Arbitrarily select a unit vector $\overrightarrow{u}$ normal to $S$.
Imagine shifting each original bar $(v,w)$ contained in $S$ from its
location in $C'$ by $\delta^2 \psi((v,w))\overrightarrow{u} $.
Consider also the circle of radius $\delta$ centered at $C'(v)$.
$\delta^2 \psi((v,w)) \le \delta^2n < \delta$, so $v$'s shifted
location is inside this circle.  Because $\delta < \lmin$, $w$'s
shifted location is outside, so the shifted bar intersects $v$'s
circle exactly once.  We set $C_\delta(v)$ to be this unique
intersection of $v$'s circle and shifted bar (and similarly for all
the other vertices in $S$).

We now show that $C_\delta$ is nontouching for sufficiently small
$\delta$.  Original bars never intersect extension bars except at
common vertices because the former lie entirely outside the circles of
radius $\delta$, and the latter entirely inside.  Intersections
between original bars in a common corridor are impossible by
construction.  Because $C_\delta$ converges to $C'$ as $\delta
\rightarrow 0$, original bars that have nonzero separation in $C'$ do
not intersect in $C_\delta$ for small enough $\delta$.  It remains to
handle pairs of original bars that touched in $C'$ precisely at a
vertex location $v$.  Take two such bars, with offsets $i$ and $j$,
and with a relative angle of $\theta$ in $C'$.  If they intersect in
$C_\delta$, it is at a distance at most $(i+j) \delta^2 / \sin
|\theta| \le 2n \delta^2/\sin|\tmin| < \delta$ from $C'(v)$.  Because
neither bar intersects the circle of radius $\delta$ about $v$, no two
original bars cross in $C_\delta$.

We have constructed $C_\delta$ so that the orderings of vertices
around the circles of radius $\delta$ are compatible with the ordering
of inbounds at each vertex location.  The microscopic noncrossing
condition therefore forbids extension bars from crossing.  Thus
$C_\delta$ is noncrossing.

Having shown that $C_\delta$ is noncrossing and converges to $C'$ as
$\delta$ goes to 0, all that remains is to show that $A_\delta$, the
corresponding annotations, converge to $A'$. Because $(C,A)$ is well
annotated, Lemma~\ref{lemma:ord-func-prop} implies each annotation in
$A_\delta$ for pairs of bars not sharing a corridor segment converges
to the corresponding annotation in $A'$.  By
Lemma~\ref{lemma:ord-func-prop}(\ref{lembullet:limit-ord}), the bars
$A_\delta$ for pairs of bars sharing a corridor segment have
accumulation points at $\pm$ the corresponding annotations in $A'$.
But the offsets for bars in the corridors were chosen precisely so the
signs of the annotations in $A_\delta$ matched the signs of
annotations in $A$.  Thus, the annotations converge to $A'$.

Taking any sequence of $\delta$s that converges to zero, we conclude
that $\cgconf{0}{\LL} \subset \rgconf{0}{\LL}$.
\end{proof}

\begin{lemma} \label{lemma:equivalence-2}
$\rgconf{0}{\LL} \subset \cgconf{0}{\LL}$.
\end{lemma}
\begin{proof}
Take any extended noncrossing configuration $(C,A) \in
\rgconf{0}{\LL}$; we need to prove that $(C,A)$ is macroscopically
noncrossing, well-annotated, well-ordered, and microscopically
noncrossing.  Let $(C_k,A_k)$ be a sequence of nontouching
configurations of some extension $\LL'$ of $\LL$ that converges to an
extension $(C',A')$ of $(C,A)$.  The macroscopic noncrossing condition
is easily met because the configurations in which bars have a strict
crossing form an open set, so that a limit of nontouching
configurations cannot have a strict crossing.  The well-annotated
condition follows immediately from
Lemma~\ref{lemma:ord-func-prop}(\ref{lembullet:limit-ord}) and the
continuity of $\ord$ over non-interior-intersecting edges.



To prove the well-ordered and microscopically noncrossing conditions,
we draw small circles around each vertex location.  Take $\delta$
small enough that, in $C'$, the circle of radius $4 \delta$ drawn
around a vertex location does not contain any other vertex locations,
and does not intersect any edges that are not inbounds to the vertex
location.  For some $k_0$ and all $k \ge k_0$, each vertex is less
than $\delta$ away from its final location, so each bar with nonzero
length in $C'$ crosses the circles corresponding to its endpoints, and
each bar with length 0 in $C'$ is contained within the circle that is
common to both its endpoints.  Furthermore, for some $k_1$ and all $k
\ge k_1$, annotations in $A_k$ that have nonzero limits have strictly
the same sign as in $A'$.  Henceforth, we assume that $k \ge k_0,
k_1$.

Suppose $e_i$ is an edge connecting vertex location $v'$ to vertex
location $v$.  Let $R_v$ be the circle centered at $v$ with radius
$2\delta$.  Then $e_i$ intersects $R_v$ somewhere between vertex $v$
and $v'$.  Let $\alpha_{i,k}$ be the angle from a reference direction
to this intersection between $e_i$ and $R_v$.  Without loss of
generality, we may assume the reference direction is not
$\lim_{k\rightarrow\infty} \alpha_{i,k}$ for any bar entering any
vertex location in $\LL$.  Then because for $k$ sufficiently large,
$e_i$ and $e_j$ , there exists $k_2$ such that if
$\lim_{k\rightarrow\infty} \alpha_{i,k} > \lim_{k\rightarrow\infty}
\alpha_{j,k}$, then for all $k \ge k_2$, $\alpha_{i,k} >
\alpha_{j,k}$.  Henceforth, we assume that $k \ge k_2$.

We now define the necessary well-ordering.  We say $Ent(E,e_i) \orel
Ent(E,e_j)$ if for all sufficiently large $k$, $\alpha_{i,k} >
\alpha_{j,k}$.  This is a well-ordering on inbounds at $v$ for $k \ge
k_2$.  Set $\theta_i = \lim_{k\rightarrow\infty} \alpha_{i,k}$.
Inbound edges $e_i$ and $e_j$ share the same entrance $E$ at $v$ if
and only if $\theta_i = \theta_j$.

If $\theta_i > \theta_j$, then because $\lim_{k\rightarrow\infty}
\alpha_{i,k} = \theta_i > \theta_j = \lim_{k\rightarrow\infty}
\alpha_{j,k}$, for all sufficiently large $k$, $\alpha_{i,k} >
\alpha_{j,k}$, and thus the well-ordering condition is satisfied in
this case.

If inbound edges $e_i$ and $e_j$ share a common entrance, then in $c'$
they overlap.  The annotations now give the relationship between
$e_{i}$ and $e_{j}$ in $C_k$.  Assume $e_i$ is oriented from $v$
towards $v'$, and $\ord(e_i, e_j) > 0$ in $C'$ (the other cases are
symmetric).  Then for sufficiently large $k$, in $C_k$, $\ord(e_{i},
e_{j}) > 0$.  Because in $c_k$, $e_{i}$ and $e_{j}$ are nontouching, it
follows from the fact that $\alpha_{i,k} - \alpha_{j,k}$ goes to zero
as $k \rightarrow \infty$ that $\alpha_{i,k} > \alpha_{j,k}$ for
sufficiently large $k$.  This completes the proof of the well-ordering
condition.

Consider now the portion of the linkage in configuration $C_k$ which
is contained inside $R_v$.  This portion of the linkage must be a
planar graph, because $C_k$ is noncrossing.  Two intersection points of
bars with $R_v$ are connected by this graph if and only if the
corresponding inbounds are directly connected.  Given that the order
of the intersection points around $R_v$ matches the order of the
inbounds to $v$, the fact that the graph is planar is precisely the
microscopic noncrossing condition.

Thus $\rgconf{0}{\LL} \subset \cgconf{0}{\LL}$.
\end{proof}

Theorem \ref{thm:equivalence} now follows immediately from Lemmas
\ref{lemma:equivalence-1} and \ref{lemma:equivalence-2} and the fact
that $\LL$ is a (trivial) extension of~$\LL$.

\fi

\section{The Generalized Carpenter's Rule Theorem}
\label{sec:carpenter}

The Carpenter's Rule Theorem says that any nontouching configuration
of an open or closed chain linkage can be convexified through a
continuous motion~\cite{Linkage}.  In this section, we
use our definition of a noncrossing linkage to extend the Carpenter's
Rule Theorem to all noncrossing linkages.  That is, we shall show that
when $\LL$ is an open chain linkage, $\igconf{0}{\LL}$ is connected.

One might hope to show $\igconf{0}{\LL}$ is connected for closed
chains.  Such a result is not true, because the configuration space for
a closed chain may have two connected components, one that turns
clockwise, and one that turns counter-clockwise.  We instead
generalize by showing that any connected component of the
noncrossing configuration space contains a connected component of
the corresponding nontouching configuration space.  To express this
concept, we use the following definition:

\begin{definition}
We say that a semi-algebraic set $A$ \emph{\dominates}a semi-algebraic
subset $B$ if every connected component of $A$ contains a connected
component of $B$.
\end{definition}

We will implicity use in the following discussion that connected
components of semi-algebraic sets are path-connected and
semi-algebraic~\cite[Proposition 2.5.13]{Bochnak-Coste-Roy-1998}.

\begin{lemma}
\label{lemma:sconf-connected}
If $\LL$ is a chain linkage and $\nconf_0(\LL) \neq \emptyset$, and
$\epsilon \ge 0$, then $\nconf_{\epsilon}(\LL)$ \dominates
$\nconf_0(\LL)$
\end{lemma}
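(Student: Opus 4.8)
The plan is to prove that every connected component of $\nconf_{\epsilon}(\LL)$ contains a connected component of $\nconf_0(\LL)$ by producing, for each nontouching $\epsilon$-related configuration, a path inside $\nconf_{\epsilon}(\LL)$ that reaches an honest configuration of $\LL$ (where all bar lengths are exactly correct). The key observation is that the bar lengths can be interpolated: if $C$ is a nontouching configuration of a linkage $\LL'$ that is $\epsilon$-related to $\LL$, we want to continuously deform the bar lengths of $\LL'$ back to those of $\LL$ while keeping the configuration nontouching throughout. The natural device is a scaling/homothety argument combined with length interpolation.

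First I would reduce to the following claim: given any $C \in \nconf_{\epsilon}(\LL)$, say $C$ is a configuration of $\LL'$ $\epsilon$-related to $\LL$, there is a continuous family $C_t$, $t \in [0,1]$, with $C_0 = C$, each $C_t \in \nconf_{\epsilon}(\LL)$, and $C_1 \in \nconf_0(\LL)$. Given such a family, the connected component of $\nconf_{\epsilon}(\LL)$ containing $C$ also contains $C_1$, and hence contains the entire connected component of $\nconf_0(\LL)$ through $C_1$ (this last step uses that connected components of semi-algebraic sets are path-connected, as noted in the excerpt, together with $\nconf_0(\LL) \subseteq \nconf_{\epsilon}(\LL)$); this establishes the \dominates relation. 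Note $\nconf_0(\LL) \neq \emptyset$ is needed only to guarantee $\nconf_{\epsilon}(\LL)$ is itself nonempty so the statement is not vacuous, but the real content is the path construction.

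To build $C_t$, I would first apply a uniform scaling of the whole plane by a factor $\lambda \ge 1$, which preserves noncrossing-ness (scaling is a homeomorphism of the plane, and it maps the linkage $\LL'$ to the linkage $\lambda \LL'$ whose lengths are $\lambda$ times those of $\LL'$); choosing $\lambda$ close to $1$ keeps us within $\conf_{\epsilon}(\LL)$. Then, with a configuration that is a bit "spread out", I would shrink the individual overlong bars and lengthen the individual short bars back toward their $\LL$-lengths. The cleanest way: interpolate linearly in the lengths, $\ell_t(e) = (1-t)\ell'(e) + t\,\ell(e)$, and realize this by sliding each vertex along bars. Concretely, since the chain is a path (or cycle), one can process bars one at a time from an end, translating rigidly the sub-chain beyond the current bar inward or outward along the bar's direction; a small enough motion keeps things nontouching because nontouchingness is an open condition. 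Each such elementary move stays in $\nconf_{\epsilon}(\LL)$ as long as $\ell_t(e)$ stays within $\epsilon$ of $\ell(e)$, which holds since $\ell_t(e)$ lies between $\ell'(e)$ and $\ell(e)$. Concatenating these moves yields the desired path to $C_1 \in \nconf_0(\LL)$.

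The main obstacle is making the elementary length-adjustment moves rigorous while preserving the nontouching condition globally, not just locally: shrinking a bar slides an entire sub-chain, which could in principle create a new crossing far away. The fix is to observe that this is a finite sequence of moves, and along each move the configuration varies continuously, so the set of "bad" parameter values (where a crossing occurs) is closed; one must argue it is also bounded away from the start. Here the scaling pre-step helps: after scaling up by $\lambda$, any two bars that were merely touching now have positive separation (scaled distances grow), and overlapping bars remain overlapping but with the overlap structure unchanged, so there is a uniform $\eta > 0$ margin that absorbs the small sliding motions. Alternatively, and perhaps more slickly, one can avoid sliding altogether: first scale the whole configuration up by $\lambda = \max_e \ell(e)/\ell'(e)$-ish so that every bar becomes at least as long as in $\LL$, then for each overlong bar independently pull its far endpoint in along the bar — but one still confronts the same global-nontouching bookkeeping. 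I expect the writeup to lean on openness of the nontouching condition plus a compactness argument on $[0,1]$ to extract a finite subdivision on which each step is safe.
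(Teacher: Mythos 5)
There is a genuine gap: your construction never invokes the nontouching Carpenter's Rule Theorem, which is the engine of the paper's proof, and the elementary moves you propose cannot be justified without it. The paper first moves $C$ to its canonical configuration (straight for an open chain, convex concyclic for a closed chain) using the Carpenter's Rule Theorem applied to the $\epsilon$-related linkage, and only then interpolates toward the canonical configuration of $\LL$; there, convex position is preserved along the interpolation, so the path stays nontouching with no global bookkeeping. In your version the length corrections have fixed size up to $\epsilon$, while the clearance of an arbitrary nontouching configuration can be arbitrarily small relative to $\epsilon$ (think of a tightly wound spiral chain whose adjacent turns are far closer than $\epsilon$): rigidly translating the sub-chain beyond a bar by up to $\epsilon$ genuinely can create crossings, and openness of the nontouching condition plus compactness of $[0,1]$ only show the bad parameter set is closed, not empty. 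The scaling pre-step does not rescue this: to remain in $\conf_\epsilon(\LL)$ you need $|\lambda\,\ell'(e)-\ell(e)|\le\epsilon$ for every bar, so if some bar already has length $\ell(e)+\epsilon$ no $\lambda>1$ is admissible at all; and even when admissible, scaling multiplies clearances by a factor near $1$ rather than creating a margin comparable to $\epsilon$, and it does nothing for bars meeting at a shared vertex.

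A second, independent problem is the closed-chain case: ``translating rigidly the sub-chain beyond the current bar'' is not a legal move on a cycle, since that sub-chain is attached at both ends and the translation destroys the length constraint of the other incident bar; yet closed chains are exactly where the difficulty lies and where the hypothesis $\nconf_0(\LL)\neq\emptyset$ has content (for an open chain the straight configuration always exists). Your reading of that hypothesis is also off: it is not there to make $\nconf_\epsilon(\LL)$ nonempty, but to ensure the target $\nconf_0(\LL)$ has a component to land in (for the degenerate closed chain of Figure~\ref{fig:need-stretch}, $\nconf_\epsilon(\LL)\neq\emptyset$ while $\nconf_0(\LL)=\emptyset$ and the conclusion fails). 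In short, to adjust lengths safely you must first reach a configuration where convexity protects you, and getting there is precisely what the Carpenter's Rule Theorem provides; without it, completing your plan would essentially require re-proving that theorem.
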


\begin{proof}
Each configuration $C$ of a chain linkage has a corresponding
canonical configuration.  For an open chain, it is the straight
configuration; for a closed chain, it is the configuration where the
vertices are concyclic, turning in the same direction as $C$.

For any $C \in \nconf_{\epsilon}(\LL)$, $C$ is connected to its
canonical configuration by the nontouching Carpenter's Rule
Theorem.  Thus, if we can show $C$'s canonical
configuration is connected to an element of $\nconf_0(\LL)$, it will
follow that $\nconf_{\epsilon}(\LL)$ \dominates $\nconf_0(\LL)$.

Fix a vertex location and an edge direction from that vertex (to
factor out translations and rotations).  Then there is a unique map
sending each configuration $C \in \nconf_{\epsilon}(\LL)$ to a
corresponding canonical configuration $C' \in \nconf_0(\LL)$ turning
in the same direction.  By linearly interpolating between the
canonical configuration for $C$ and $C'$, we obtain a path between the
two in $\conf_{\epsilon}(\LL)$.  Because at the endpoints of the path,
each vertex is in convex position, the vertices remain in convex
position along the path, so that the path is contained entirely in
$\nconf_{\epsilon}(\LL)$.  Thus the canonical configuration for $C$ is
in the same connected component as $C' \in \nconf_0(\LL)$, as desired.
\end{proof}

\begin{lemma}
\label{lemma:igconf-connected}
Suppose a linkage $\LL$ is connected, and there exists a $\delta>0$
such that for all $\epsilon \le \delta$, $\nconf_\epsilon(\LL)$
\dominates $\nconf_0(\LL)$.  Then $\igconf{0}{\LL}$ \dominates
$\annot{\nconf_0(\LL)}{\LL}$.
\end{lemma}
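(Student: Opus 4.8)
The plan is to leverage the semi-algebraic structure and the fact (Theorem~\ref{th:conf-semialgebraic}) that all the spaces in play are semi-algebraic, together with Lemma~\ref{lem:igconf-property} characterizing $\igconf{0}{\LL}$ as $\bigcap_{n} \overline{\annot{\nconf_{1/n}(\LL)}{\LL}}$. The key point to exploit is that $\igconf{0}{\LL}$ is a \emph{closed} semi-algebraic set obtained as a nested intersection of the closures $\overline{\annot{\nconf_\epsilon(\LL)}{\LL}}$, and that for small $\epsilon$ each $\nconf_\epsilon(\LL)$ path-expands $\nconf_0(\LL)$ by hypothesis. First I would fix a connected component $Z$ of $\igconf{0}{\LL}$; since $Z$ is semi-algebraic and path-connected, I want to produce inside it a whole connected component of $\annot{\nconf_0(\LL)}{\LL}$. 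Pick any point $z \in Z$. By Lemma~\ref{lem:igconf-property}, $z$ lies in $\overline{\annot{\nconf_{1/n}(\LL)}{\LL}}$ for every $n$, so there is a sequence of annotated nontouching $\epsilon_k$-related configurations $\alpha_k \to z$ with $\epsilon_k \to 0$; in particular $\epsilon_k \le \delta$ eventually.

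Next I would pass from these approximating points to the component structure. For each large $k$, $\alpha_k$ lies in some connected component $W_k$ of $\annot{\nconf_{\epsilon_k}(\LL)}{\LL}$; by the path-expansion hypothesis (transported through the annotation map, which is a homeomorphism onto its image over nontouching configurations by Lemma~\ref{lemma:annot-func-prop}), $W_k$ contains a connected component $V_k$ of $\annot{\nconf_0(\LL)}{\LL}$. The goal is to show $V_k \subseteq Z$ for all sufficiently large $k$. Since $\annot{\nconf_0(\LL)}{\LL} \subseteq \igconf{0}{\LL}$ and the latter is partitioned into its components, each $V_k$ lies in a single component of $\igconf{0}{\LL}$; I must argue that component is $Z$. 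The natural route is a connectedness/limit argument: the sets $\overline{W_k}$ (closures of the approximating components) are connected, contain $\alpha_k \to z$, and I would like their ``limit'' to be connected and to lie in $Z$. Here I would use a compactness argument on a bounded region (configurations of a fixed chain, up to the rigid-motion normalization of Lemma~\ref{lemma:sconf-connected}, live in a bounded semi-algebraic set) together with the fact that there are only finitely many semi-algebraic ``types'' of components as $\epsilon$ varies — so one can extract a subsequence along which the $\overline{W_k}$ converge (in Hausdorff distance) to a connected compact set $W_\infty$ with $z \in W_\infty \subseteq \overline{\asconf{1}{\LL}}$, and in fact $W_\infty \subseteq \igconf{0}{\LL}$ since every point of $W_\infty$ is a limit of $\epsilon_k$-related nontouching annotated configurations with $\epsilon_k \to 0$, hence lies in each $\overline{\annot{\nconf_{1/n}}{\LL}}$. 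Being connected and containing $z$, $W_\infty \subseteq Z$, and since the corresponding $V_k$ eventually lie within $\delta$-distance of $W_\infty$ (being subsets of $W_k$ whose closures converge to $W_\infty$), $V_k \subseteq Z$.

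Finally, $V_k$ is a connected component of $\annot{\nconf_0(\LL)}{\LL}$ sitting inside the component $Z$ of $\igconf{0}{\LL}$, which is exactly what path-expansion demands. The main obstacle I anticipate is the convergence step: I need a clean statement that the connected components of $\annot{\nconf_\epsilon(\LL)}{\LL}$ behave ``semicontinuously'' as $\epsilon \to 0$ — precisely, that a point in the closure of $\bigcap_n \overline{\annot{\nconf_{1/n}}{\LL}}$ is approached by genuinely connected subsets of the $\nconf_{\epsilon_k}$ whose limit stays connected. This is where uniform boundedness (after normalizing out rigid motions, as in Lemma~\ref{lemma:sconf-connected}) and a Hausdorff-limit / finiteness-of-semi-algebraic-families argument do the real work; the rest is bookkeeping with the definitions and the continuity of $\annot{}{\LL}$ on nontouching configurations. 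An alternative, possibly cleaner, packaging of the same idea would avoid Hausdorff limits by instead noting that $Z$, being a component of a closed semi-algebraic set, is itself closed, and that $Z$ equals the closure of $Z \cap \asconf{1}{\LL}$; then one only needs that each point of the dense subset $Z \cap \asconf{1}{\LL}$ lies in a component of some $\nconf_\epsilon$ that path-expands $\nconf_0$ and whose expanded $\nconf_0$-component remains in $Z$ — reducing the problem to showing $Z \cap \asconf{1}{\LL}$ is itself ``path-connected enough'' to keep all these $\nconf_0$-components in one $\igconf{0}{\LL}$-component, which follows since they are all joined through $Z$.
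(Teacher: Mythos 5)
Your route is genuinely different from the paper's: you argue pointwise, fixing $z$ in a component $Z$ of $\igconf{0}{\LL}$, extracting a Hausdorff (Blaschke) limit $W_\infty$ of the closures $\overline{W_k}$ of the approximating components of $\annot{\nconf_{\epsilon_k}(\LL)}{\LL}$, and showing $W_\infty$ is connected, contains $z$, and lies in $\igconf{0}{\LL}$ (hence in $Z$). That skeleton is workable, and the normalization of translations for compactness is the same device the paper uses. The paper, by contrast, needs no Hausdorff limits: it lists the finitely many components $Y_1,\dots,Y_r$ of $\annot{\nconf_0(\LL)}{\LL}$, writes $\overline{\asconf{1/n}{\LL}}=\bigcup_j X_{j,n}$ with $X_{j,n}$ the component containing $Y_j$, uses the nesting $X_{j,n+1}\subset X_{j,n}$ to commute $\bigcap_n$ with $\bigcup_j$ via Lemma~\ref{lem:igconf-property}, and then only needs that a descending intersection of nonempty compact connected sets is connected; each $\bigcap_n X_{j,n}$ visibly contains $Y_j$.

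The genuine gap is your final step. From $V_k\subseteq \overline{W_k}$ and $\overline{W_k}\to W_\infty$ you conclude that $V_k$ eventually lies within ``$\delta$-distance'' of $W_\infty$ and hence $V_k\subseteq Z$; this does not follow. $V_k$ is a connected subset of $\igconf{0}{\LL}$ and so lies in \emph{some} component of $\igconf{0}{\LL}$, but proximity to $W_\infty\subseteq Z$ does not force that component to be $Z$ (distinct components of a closed set can be arbitrarily close, and the constant $\delta$ from the hypothesis has nothing to do with component separation). Two repairs are available, neither of which appears in your write-up: (i) pigeonhole --- since $\annot{\nconf_0(\LL)}{\LL}$ is semi-algebraic it has finitely many components, so some fixed component $Y$ equals $V_k$ for infinitely many $k$; along that subsequence $Y\subseteq\overline{W_k}$, and Hausdorff convergence plus closedness of $W_\infty$ gives $Y\subseteq W_\infty\subseteq Z$, which is the required component inside $Z$; or (ii) prove that the finitely many components of the translation-normalized, compact, semi-algebraic set $\igconf{0}{\LL}$ are pairwise at positive distance $\eta$, and take $k$ with $d_H(\overline{W_k},W_\infty)<\eta$. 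Two smaller issues: the subsequence extraction should be justified by Blaschke selection in the compact normalized slice (finiteness of semi-algebraic ``types'' does not by itself yield Hausdorff convergence), and your alternative packaging relies on $Z=\overline{Z\cap\asconf{1}{\LL}}$, which is unjustified --- the definition only makes noncrossing configurations limits of nontouching configurations of $1$-related linkages, not of $\LL$ itself --- so that variant should be dropped or separately proved.
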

\begin{proof}
This lemma is trivially true if for some $\epsilon > 0$,
$\nconf_{\epsilon}(\LL)$ is empty, in which case $\igconf{0}{\LL}$ is
also empty. In the rest of this proof, we assume that this is not the
case.

Suppose $n > 1/\delta$ (so that $0 < 1/n < \delta$).  Because the
annotation function is continuous on nontouching configurations, and
$\nconf_{1/n}(\LL)$ \dominates $\nconf_0(\LL)$, $\asconf{1/n}{\LL}$
\dominates $\annot{\nconf_0(\LL)}{\LL}$.  It follows that the
topological closure $\overline{\asconf{1/n}{\LL}}$ of
$\asconf{1/n}{\LL}$ \dominates $\annot{\nconf_0(\LL)}{\LL}$,
because the closure of a connected set is connected.

Let $Y_1, \ldots, Y_r$ be the connected components of
$\annot{\nconf_0(\LL)}{\LL}$ (semi-algebraic sets always have finitely
many connected components).  Because $\overline{\asconf{1/n}{\LL}}$
\dominates $\annot{\nconf_0(\LL)}{\LL}$, each connected component of
$\overline{\asconf{1/n}{\LL}}$ contains one of the $Y_j$.  Thus, we
can write $$\overline{\asconf{1/n}{\LL}} = \bigcup_{j=1}^{r}
X_{j,n}$$ where $X_{j, n}$ is the connected component of
$\overline{\asconf{1/n}{\LL}}$ containing $Y_j$.  For a given $n$, two
different $X_{j,n}$ are either disjoint or equal.  Further, because
$\overline{\asconf{1/(n+1)}{\LL}} \subset
\overline{\asconf{1/n}{\LL}}$, $X_{j,n+1} \subset X_{j,n}$.  By Lemma
\ref{lem:igconf-property},
\[
\igconf{0}{\LL} = \bigcap_{n=1}^\infty \overline{\asconf{1/n}{\LL}} =
\bigcap_{n=1}^\infty \bigcup_{j=1}^r X_{j,n} =
\bigcup_{j=1}^r \bigcap_{n=1}^\infty X_{j,n}
\]
where we can commute the intersection and union because for each $j$,
the $X_{j,n}$ are a descending sequence.  Now, $\bigcap_{n=1}^\infty
X_{j,n}$ contains $Y_j$, because each $X_{j,n}$ does.  Thus to prove our
lemma, it suffices to show $\bigcap_{n=1}^\infty X_{j,n}$ is
connected.

Notice that the $X_{j,n} \subset
\overline{\annot{\nconf_{1/n}{\LL}}{\LL}}$ are nonempty sets invariant
under translation.  By factoring out the translations via the choice
of a point to place at the origin, we can write $X_{j,n} = \RR^2
\times K_{j, n}$, where $K_{j,n}$ is nonempty, closed, and connected.
Further, because $\LL$ is connected, and our bars have bounded length,
$K_{j,n}$ is bounded in $\RR^N$ for $N = |V| + |E|^2 - 2$.  It follows
that $K_{j,n}$ is compact.  Thus $\bigcap_{n=1}^\infty K_{j,n}$ is the
intersection of a descending sequence of nonempty, compact, connected
sets, and thus is a nonempty compact, connected set.  Thus
\[
\bigcap_{n=1}^\infty X_{j,n} = \bigcap_{n=1}^\infty \RR^2 \times
K_{j,n} = \RR^2 \times \bigcap_{n=1}^\infty K_{j,n}
\]
is the product of connected sets, and hence connected.  Thus
$\igconf{0}{\LL} = \bigcup_{j=1}^r \bigcap_{n=1}^\infty X_{j,n}$ is a
union of connected sets, each containing a connected component of
$\annot{\nconf_0(\LL)}{\LL}$, so $\igconf{0}{\LL}$ \dominates
$\annot{\nconf_0(\LL)}{\LL}$, as desired.
\end{proof}

We are now ready for our main theorem.

\begin{theorem}[Generalized Carpenter's Rule Theorem]

For any open chain linkage $\LL$, $\igconf{0}{\LL}$ is connected.  For
any closed chain linkage $\LL$, $\igconf{0}{\LL}$ has at most two
connected components.

\end{theorem}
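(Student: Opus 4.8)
The plan is to assemble the theorem from the three lemmas that precede it by a short chain of implications. The crucial bridge is Lemma~\ref{lemma:sconf-connected}: when $\LL$ is a chain linkage with $\nconf_0(\LL) \ne \emptyset$, it tells us that $\nconf_\epsilon(\LL)$ \dominates $\nconf_0(\LL)$ for \emph{every} $\epsilon \ge 0$, so in particular the hypothesis of Lemma~\ref{lemma:igconf-connected} is satisfied (take $\delta$ to be anything positive). Applying Lemma~\ref{lemma:igconf-connected}, we conclude that $\igconf{0}{\LL}$ \dominates $\annot{\nconf_0(\LL)}{\LL}$; that is, every connected component of $\igconf{0}{\LL}$ contains a connected component of $\annot{\nconf_0(\LL)}{\LL}$. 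So the number of connected components of $\igconf{0}{\LL}$ is at most the number of connected components of $\annot{\nconf_0(\LL)}{\LL}$. Since $\annot{}{\LL}$ is injective and continuous on nontouching configurations (Lemma~\ref{lemma:annot-func-prop}), $\annot{\nconf_0(\LL)}{\LL}$ has exactly as many connected components as $\nconf_0(\LL)$.

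First I would dispose of the degenerate case where $\nconf_0(\LL) = \emptyset$. In that case Lemma~\ref{lemma:sconf-connected} does not apply, but one checks directly from Definition~\ref{def:igconf} (via Lemma~\ref{lem:igconf-property}) that $\igconf{0}{\LL}$ is empty as well: a noncrossing configuration of a chain linkage whose bar lengths equal those of $\LL$ would, being a limit of nontouching $1/n$-related configurations and having the exact bar lengths, force $\nconf_0$ to be nonempty — more carefully, one can argue that if $\nconf_0(\LL)=\emptyset$ then for small enough $\epsilon$, $\nconf_\epsilon(\LL)$ is empty too (the obstruction to nontouching for a chain is an open/generic condition that persists under small length perturbations), hence the intersection defining $\igconf{0}{\LL}$ is empty. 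An empty set vacuously satisfies "connected" and "at most two components," so the statement holds.

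Now the nonempty case. For an open chain linkage, $\nconf_0(\LL)$ is connected — this is exactly the classical (nontouching) Carpenter's Rule Theorem of~\cite{Linkage}, which says every simple open chain configuration convexifies and the straight configuration is a common endpoint. Hence $\annot{\nconf_0(\LL)}{\LL}$ is connected, and since $\igconf{0}{\LL}$ \dominates it, $\igconf{0}{\LL}$ has a single connected component — but we also need $\igconf{0}{\LL}$ to be nonempty, which follows since $\annot{\nconf_0(\LL)}{\LL} \subseteq \igconf{0}{\LL}$ is nonempty. For a closed chain linkage, $\nconf_0(\LL)$ has at most two connected components: the classical Carpenter's Rule Theorem convexifies every simple closed chain to a convex (concyclic) polygon preserving turning direction, and there are only two turning directions, clockwise and counterclockwise, so the configurations split into at most two classes each of which is connected (mirror configurations may coincide in degenerate cases, giving fewer). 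Therefore $\annot{\nconf_0(\LL)}{\LL}$ has at most two connected components, and since $\igconf{0}{\LL}$ \dominates it, $\igconf{0}{\LL}$ has at most two connected components.

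The main obstacle I anticipate is not in the logical assembly, which is essentially immediate once the lemmas are in hand, but in being careful about the $\nconf_0(\LL) = \emptyset$ case and about citing the nontouching Carpenter's Rule Theorem with precisely the structural consequences needed (connectivity of the open-chain space; at most two components, indexed by turning direction, for the closed-chain space). Everything else is just "\dominates\ transfers an upper bound on the number of components," combined with the observation that $\annot{}{\LL}$ is a homeomorphism onto its image over $\nconf_0(\LL)$ by Lemma~\ref{lemma:annot-func-prop}.
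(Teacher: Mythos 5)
Your assembly of the non-degenerate case is exactly the paper's: Lemma~\ref{lemma:sconf-connected} feeds the hypothesis of Lemma~\ref{lemma:igconf-connected}, path-expansion transfers the component count, and the classical Carpenter's Rule Theorem supplies ``one component'' for open chains and ``at most two'' for closed chains. That part is fine.

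The gap is in your treatment of $\nconf_0(\LL) = \emptyset$. You claim that if $\nconf_0(\LL)$ is empty then $\nconf_\epsilon(\LL)$ is empty for all sufficiently small $\epsilon$, so that $\igconf{0}{\LL}$ is empty and the statement holds vacuously. This is false, and the counterexample is the very configuration in Figure~\ref{fig:need-stretch} that motivates the paper's $\epsilon$-related machinery: a closed chain in which one bar's length equals the sum of all the others. Such a linkage has no nontouching configuration with the exact lengths (it can only lie flat, which is self-touching), yet for every $\epsilon > 0$ there are $\epsilon$-related linkages with nontouching configurations (shrink the long bar slightly and open the chain into a thin nondegenerate polygon). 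Hence $\igconf{0}{\LL}$ is \emph{nonempty} here, the ``obstruction persists under small length perturbations'' heuristic fails, and the vacuous argument collapses; worse, Lemma~\ref{lemma:igconf-connected} cannot be applied either, since its hypothesis that $\nconf_\epsilon(\LL)$ \dominates $\nconf_0(\LL)$ fails when $\nconf_0(\LL)$ is empty but $\nconf_\epsilon(\LL)$ is not. The paper handles this by splitting the empty case in two: if $\nconf_\epsilon(\LL) = \emptyset$ for some $\epsilon > 0$, then indeed $\igconf{0}{\LL} \subset \overline{\annot{\nconf_\epsilon(\LL)}{\LL}}$ is empty; otherwise (noting first that an open chain always has a nontouching straight configuration, so $\LL$ must be a closed chain) $\LL$ is precisely a closed chain with one bar as long as all the others combined, and one argues directly that its noncrossing configurations are at most two (related by reflection), giving at most two components. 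You need to add an argument of this kind for the degenerate closed chain rather than declaring $\igconf{0}{\LL}$ empty.
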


\begin{proof}
Suppose $\nconf_0(\LL) \neq \emptyset$.  By the Carpenter's Rule
Theorem, $\nconf_0(\LL)$ has one connected component if $\LL$ is an
open chain, and at most two if $\LL$ is a closed chains.  Applying
Lemmas~\ref{lemma:sconf-connected}, and~\ref{lemma:igconf-connected},
we see that $\igconf{0}{\LL}$ also has this property.

If $\nconf_0(\LL) = \emptyset$, then $\LL$ must be a closed chain.  If
for some $\epsilon > 0$, $\nconf_\epsilon(\LL)$ is empty, then
$\igconf{0}{\LL} \subset \overline{\nconf_\epsilon(\LL)}$ is empty,
and so there are no connected components.  The remaining case is that
$\LL$ has no nontouching configurations, but for any $\epsilon>0$,
there are $\epsilon$-related linkages to $\LL$ with nontouching
configurations.  This happens precisely when $\LL$ is a closed chain
where one edge is equal in length to the sum of all the others, as in
Figure \ref{fig:need-stretch}.  Such $\LL$ have at most two
noncrossing configurations (related by reflection), and thus at most
two connected components.
\end{proof}

\section{Strictly Slender Polygonal Adornments}\label{sec:slender}
\begin{wrapfigure}{r}{1.8in}
  \centering
  \vspace*{-8ex}
  \includegraphics[scale=0.8]{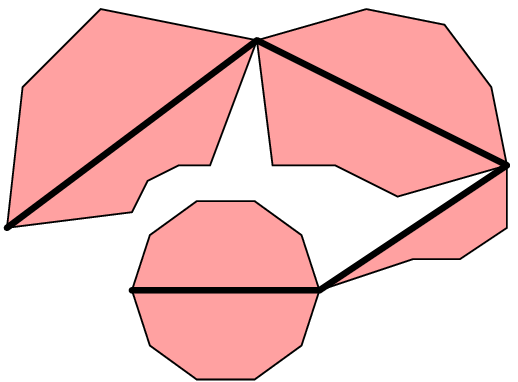}
  \vspace*{-1ex}
  \caption{A polygonally adorned chain with strictly slender adornments.}
  \vspace*{-4ex}
  \label{fig:strictly-slender-polygonal-adornment}
\end{wrapfigure}

In \cite{chains-of-planar-shapes}, it is shown that chains with
\emph{slender adornments} satisfy an analogue of the Carpenter's Rule
Theorem: every such open chain can be straightened and every such
closed chain can be convexified.  In this section, we show that
strictly slender polygonal adornments of open chains satisfy a version
of the self-touching Carpenter's Rule Theorem.

\begin{definition}
A \emph{polygonal adornment} $R$ is a compact, simply connected
polygonal region, together with a \emph{base} $B$, which is a
distinguished line segment connecting two of its boundary points and
contained in $R$.
\iffull

\fi An \emph{inward normal} of a polygonal adornment is a ray $X$
perpendicular to the boundary of the adornment starting from a point
$x \in R \setminus B$ such that $R$ contains a neighborhood of $x$ in
$X$.  \iffull

\fi
A polygonal adornment is \emph{strictly slender} if every inward
normal intersects the relative interior of its base.
\iffull

\fi A \emph{polygonally adorned chain} is set of polygonal adornments
the bases of which form a chain.  The noncrossing configurations of a
polygonally adorned chain are those where the adornments intersect
only on their boundaries.
\end{definition}

See Figure~\ref{fig:strictly-slender-polygonal-adornment} for an
example.

We can model a polygonally adorned linkage $\LL$ as a linkage $\LL'$
by replacing each polygonal adornment with a linkage triangulating
that adornment (we will call the resulting object a ``triangulated
polygonally adorned chain'').  This modeling is faithful because (1)
any triangulation is rigid and (2) if two regions were to move from a
nonoverlapping configuration to an overlapping one, the bars in the
corresponding linkage defined by their boundaries would have to cross.
We can thus re-use the topological machinery of
Lemma~\ref{lemma:igconf-connected} while replacing
Lemma~\ref{lemma:sconf-connected} with an analogue for linkages
obtained by triangulating an open chain adorned by strictly slender
polygonal adornments.

\begin{lemma}\label{lemma:strictly-slender}
Suppose $\LL$ is a triangulated polygonally adorned open chain with
strictly slender adornments.  Then there exists some $\delta > 0$ such
that for all $\epsilon < \delta$, any linkage $\epsilon$-related to
$\LL$ is strictly slender.
\end{lemma}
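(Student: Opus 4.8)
The plan is to show that strict slenderness is an \emph{open} condition on the lengths of the bars of $\LL$, so that a small enough perturbation preserves it; the claimed $\delta$ is then obtained by compactness. First I would fix the combinatorial structure: $\LL$ is a triangulated polygonally adorned open chain, so its bars partition into the ``boundary bars'' of each adornment and the ``diagonal bars'' of the triangulation; an $\epsilon$-related linkage $\LL'$ has the same graph $G$, so it has the same partition, and since each triangulation is rigid, the shape of each adorned region in $\LL'$ is determined up to congruence by the bar lengths, varying continuously (indeed semi-algebraically) with them. The base $B$ of each adornment is one of the boundary bars, and the region $R'$ of $\LL'$ is again a compact simply connected polygon with that base, provided $\epsilon$ is small enough that the triangle inequalities stay strict and the triangulation does not degenerate --- this is itself an open condition and contributes one constraint to $\delta$.

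Next I would make ``strictly slender'' quantitative. For a single adornment $R$ with base $B$, strict slenderness says every inward normal ray meets the relative interior of $B$. I would argue this is equivalent to a statement about the finitely many ``normal directions'' that can occur, namely the outward normals of the edges of $\partial R$: for each such edge $\sigma$ and each point $x$ in the relative interior of $\sigma$ (or more carefully, each maximal subsegment of $\sigma$ from which the inward normal actually stays in $R$ near $x$), the inward normal hits $\mathrm{relint}(B)$. Because $R$ is a polygon, this reduces to finitely many linear (hence semi-algebraic) inequalities in the vertex coordinates of $R$, each asking that a certain interval of ``landing points'' on the line containing $B$ lies strictly inside the segment $B$; strictness of all these inequalities is exactly strict slenderness, and the inequalities being strict is an open condition. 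Since the vertex coordinates of $R'$ depend continuously on the bar lengths of that adornment, strict slenderness of each individual adornment of $\LL'$ persists for all $\epsilon$ less than some $\delta_R > 0$.

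Finally I would assemble the global $\delta$. There are only finitely many adornments in the chain $\LL$, so taking $\delta = \min_R \delta_R$ (intersected with the earlier bound ensuring the regions stay honest polygons) gives a single $\delta>0$ such that every linkage $\epsilon$-related to $\LL$ with $\epsilon < \delta$ has all adornments strictly slender, i.e.\ is itself a triangulated strictly slender polygonally adorned open chain. The main obstacle, and the place to be careful, is the reduction in the previous paragraph: an inward normal can emanate from \emph{any} boundary point of the adornment, not just from a vertex, so one must verify that the (a priori infinite) family of inward-normal conditions really is controlled by finitely many strict inequalities on the polygon's data, and that the relevant ``landing interval'' on the base line varies continuously (and semi-algebraically) with the perturbation rather than jumping; once that finiteness and continuity is nailed down, the openness and the compactness/finiteness argument are routine.
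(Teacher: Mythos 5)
Your proposal takes essentially the same route as the paper: the paper's entire proof is the one-sentence observation that strict slenderness is an open condition on the edge lengths of $\LL$, and your argument simply supplies the supporting details (rigidity of the triangulation, reduction of strict slenderness to finitely many strict inequalities on the polygon data, and a minimum over the finitely many adornments) that the paper leaves implicit. So you are correct and in agreement with the paper, just more explicit about why the condition is open.
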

\begin{proof}
The property of being strictly slender is an open condition on the
edge lengths of $\LL$, and thus there exists a neighborhood of $\LL$
that is contained in the set of strictly slender linkages.
\end{proof}

\begin{lemma}\label{lemma:strictly-slender-connected}
Suppose $\LL$ is a triangulated polygonally adorned open chain with
strictly slender adornments.  Then for some $\delta > 0$,
$\nconf_\epsilon(\LL)$ \dominates $\nconf_0(\LL)$ for all $\epsilon
\le \delta$.
\end{lemma}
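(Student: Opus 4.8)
The plan is to follow the proof of Lemma~\ref{lemma:sconf-connected}, replacing its straight/concyclic canonical configurations by the \emph{straightened} configuration of a strictly slender adorned open chain, and replacing ``convexity is preserved along a linear interpolation of positions'' by ``strict slenderness keeps a straightened configuration nontouching''. First I would fix $\delta > 0$ as in Lemma~\ref{lemma:strictly-slender}, so that every linkage $\epsilon$-related to $\LL$ (with $\epsilon \le \delta$) is again a triangulated polygonally adorned open chain with strictly slender adornments; note that all such linkages, and all convex combinations of them, lie in the convex set of bar-length vectors obeying every triangle inequality, so no intermediate linkage degenerates.

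Next I would set up canonical configurations. After fixing one base bar together with one of its endpoints to factor out rigid motions, observe that the side on which each rigid triangulated adornment lies relative to its base is an invariant of any connected component of a configuration space, since flipping it would require passing through a placement in which an internal triangle is degenerate, violating a strict triangle inequality. Given a strictly slender linkage $\LL''$ that is $\epsilon$-related to $\LL$ and a choice $\sigma$ of sides, define $\mathrm{can}(\LL'',\sigma)$ by laying the chain of bases straight along the fixed direction and attaching each adornment rigidly on the side prescribed by $\sigma$. Solving the triangles of the triangulation makes $\mathrm{can}(\cdot,\sigma)$ a semi-algebraic function of the bar lengths, continuous on the $\delta$-ball around $\LL$. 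Crucially, strict slenderness forces a straightened adorned open chain to be non-self-overlapping, so $\mathrm{can}(\LL'',\sigma) \in \nconf_0(\LL'')$; in particular $\nconf_0(\LL) \ne \emptyset$.

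Now take any $C \in \nconf_\epsilon(\LL)$, let $\LL'$ be the (strictly slender, by the choice of $\delta$) linkage of which $C$ is a configuration, and let $\sigma$ be its side pattern. By the Carpenter's Rule Theorem for slender adornments~\cite{chains-of-planar-shapes}, there is a continuous motion from $C$ to $\mathrm{can}(\LL',\sigma)$ lying in $\nconf(\LL') \subseteq \nconf_\epsilon(\LL)$. Then linearly interpolate the bar lengths from those of $\LL'$ to those of $\LL$ while keeping $\sigma$ fixed: this gives a continuous path $t \mapsto \mathrm{can}(\LL_t,\sigma)$ in $\conf_\epsilon(\LL)$ through strictly slender adorned chains, every configuration along which is nontouching by the previous paragraph, so the path lies in $\nconf_\epsilon(\LL)$ and terminates at $\mathrm{can}(\LL,\sigma) \in \nconf_0(\LL)$. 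Hence the connected component of $C$ in $\nconf_\epsilon(\LL)$ contains $\mathrm{can}(\LL,\sigma)$, and therefore contains the whole connected component of $\nconf_0(\LL)$ through that point. Since $C$ was arbitrary, $\nconf_\epsilon(\LL)$ \dominates $\nconf_0(\LL)$ for every $\epsilon \le \delta$.

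The main obstacle is the geometric input used twice above: that the straightened configuration of a strictly slender triangulated adorned open chain is non-self-overlapping (so $\mathrm{can}$ lands in $\nconf_0$ and the interpolation stays nontouching). This is precisely where strict slenderness, rather than mere slenderness, is needed, and it is essentially the content of the straightening analysis in~\cite{chains-of-planar-shapes}: with the chain of bases on a line, an inward normal hitting the relative interior of a base keeps each adornment from ``wrapping past'' its own base, so any two adornments (in particular two consecutive ones sharing a vertex) can meet only along their boundaries. The remaining ingredients --- continuity and semi-algebraicity of $\mathrm{can}(\cdot,\sigma)$, convexity of the admissible length space so the interpolation never degenerates, and the invariance of the side pattern --- are routine, and one should check that the $\delta$ produced here is simply the one supplied by Lemma~\ref{lemma:strictly-slender}.
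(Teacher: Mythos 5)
Your proposal is correct and follows essentially the same route as the paper: fix the $\delta$ from Lemma~\ref{lemma:strictly-slender}, use the straightened configurations (one per choice of adornment reflections) as canonical configurations, connect any $C \in \nconf_\epsilon(\LL)$ to its canonical configuration via the slender-adornment Carpenter's Rule Theorem of~\cite{chains-of-planar-shapes}, and then deform to the canonical configuration of $\LL$ itself while strict slenderness keeps everything nontouching. The only (immaterial) difference is that you interpolate the bar lengths and re-solve the canonical placement, whereas the paper linearly interpolates vertex positions between the two canonical configurations.
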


\ifabstract
A proof of Lemma \ref{lemma:strictly-slender-connected} can be found
in Appendix~\ref{slender appendix}.
\else

\begin{proof}
Our argument follows the paradigm of Lemma~\ref{lemma:sconf-connected}.  The canonical configurations are those
in which the chain is straight (if the chain has $n$ adornments, there
are potentially $2^n$ such canonical configurations, determined by the
choices of reflection for each adornment).  Thus, each configuration
of $\LL$ has a corresponding canonical configuration.

Suppose $\epsilon < \delta$, for the $\delta$ defined in
Lemma~\ref{lemma:strictly-slender}.  Then for any $C \in
\nconf_{\epsilon}(\LL)$, $C$ is connected to its canonical
configuration by Theorem 8 of~\cite{chains-of-planar-shapes}.  Thus,
if we can show $C$'s canonical configuration is connected to an
element of $\nconf_0{\LL}$, it will follow that
$\nconf_{\epsilon}(\LL)$ \dominates $\nconf_0(\LL)$.

Because we have an open chain, there is a path linearly interpolating
between the canonical configuration for $C$ and the corresponding
canonical configuration in $\nconf_0(\LL)$.  This path is contained
entirely in $\nconf_{\epsilon}(\LL)$, because two different slender
adornments with bases in straight configuration never touch except at
the endpoints of the bases, and within each triangulated adornment,
for sufficiently small $\epsilon$ there will be no crossings.  Thus
the canonical configuration for $C$ is in the same connected component
as $C' \in \nconf_0(\LL)$, as desired.
\end{proof}
\fi

\begin{theorem}\label{thm:slender-adornments}
Suppose $\LL$ is triangulated polygonally adorned open chain with
strictly slender polygonal adornments.  Then any configuration in
$\igconf{0}{\LL}$ can straighten its base.
\end{theorem}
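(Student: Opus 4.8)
The plan is to repeat, almost verbatim, the two-step argument behind the Generalized Carpenter's Rule Theorem, with the slender-adornment analogues substituted in. First I would note that a triangulated polygonally adorned open chain $\LL$ is connected: each triangulated adornment is edge-connected and consecutive adornments share a base endpoint. By Lemma~\ref{lemma:strictly-slender-connected} there is a $\delta>0$ such that $\nconf_\epsilon(\LL)$ \dominates $\nconf_0(\LL)$ for every $\epsilon\le\delta$. These are exactly the hypotheses of Lemma~\ref{lemma:igconf-connected}, so $\igconf{0}{\LL}$ \dominates $\annot{\nconf_0(\LL)}{\LL}$.

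Next I would extract a nontouching representative from each component. Fix $(C,A)\in\igconf{0}{\LL}$ (if $\igconf{0}{\LL}$ is empty there is nothing to prove) and let $Z$ be its connected component. Since $\igconf{0}{\LL}$ \dominates $\annot{\nconf_0(\LL)}{\LL}$, the component $Z$ contains a whole connected component of $\annot{\nconf_0(\LL)}{\LL}$, hence some $\annot{C_0}{\LL}$ with $C_0\in\nconf_0(\LL)$; note $\nconf_0(\LL)\neq\emptyset$ because the straight configuration of strictly slender adornments is nontouching, as observed in the proof of Lemma~\ref{lemma:strictly-slender-connected}. Because connected components of semi-algebraic sets are path-connected \cite[Proposition 2.5.13]{Bochnak-Coste-Roy-1998}, there is a path inside $\igconf{0}{\LL}$ from $(C,A)$ to $\annot{C_0}{\LL}$.

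Finally I would straighten $C_0$ using the slender-adornment Carpenter's Rule Theorem (Theorem~8 of \cite{chains-of-planar-shapes}), the same input used in Lemma~\ref{lemma:strictly-slender-connected}: $C_0$ is joined inside $\nconf_0(\LL)$ to a configuration whose base chain is straight. Applying the annotation map to this motion --- it is continuous on nontouching configurations, and $\annot{\nconf_0(\LL)}{\LL}\subseteq\igconf{0}{\LL}$ since every nontouching configuration of $\LL$ is trivially a limit of nontouching configurations of a linkage $1$-related to $\LL$ --- produces a path inside $\igconf{0}{\LL}$ from $\annot{C_0}{\LL}$ to a configuration with straight base. Concatenating with the path from the previous step, $(C,A)$ can straighten its base, which proves the theorem.

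I do not expect a real obstacle, since the substantive content now lives in Lemmas~\ref{lemma:strictly-slender-connected} and~\ref{lemma:igconf-connected} and in \cite{chains-of-planar-shapes}. The one point that needs care is confirming that the straightening motion, which a priori lives only in $\nconf_0(\LL)$, actually stays inside $\igconf{0}{\LL}$; this is precisely the inclusion $\annot{\nconf_0(\LL)}{\LL}\subseteq\igconf{0}{\LL}$ together with continuity of the annotation map on nontouching configurations. A minor secondary point is the nonemptiness of $\nconf_0(\LL)$, handled above.
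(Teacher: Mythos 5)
Your proposal is correct and follows essentially the same route as the paper: Lemma~\ref{lemma:strictly-slender-connected} supplies the hypothesis of Lemma~\ref{lemma:igconf-connected}, so $\igconf{0}{\LL}$ \dominates $\annot{\nconf_0(\LL)}{\LL}$, and then each component of $\nconf_0(\LL)$ is straightened via Theorem~8 of \cite{chains-of-planar-shapes}. You merely spell out details the paper leaves implicit (path-connectedness of semi-algebraic components, the inclusion $\annot{\nconf_0(\LL)}{\LL}\subseteq\igconf{0}{\LL}$, and nonemptiness of $\nconf_0(\LL)$), all of which are fine.
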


\begin{proof}
The result follows from Lemmas~\ref{lemma:igconf-connected}
and~\ref{lemma:strictly-slender-connected}, because any connected
component of $\nconf_0(\LL)$ contains a configuration in which its
base is straight by~\cite{chains-of-planar-shapes}.
\end{proof}

Theorem~\ref{thm:slender-adornments} may be somewhat unsatisfying in
that it has a number of restrictions on its applicability: ``open
chain'', ``polygonal'', and ``strictly slender'', in particular.
Theorem~\ref{thm:slender-adornments} is not true for closed chains,
because while every slenderly adorned closed chain can be convexified,
the convex configurations of closed adorned chains are not necessarily
reachable from each other~\cite{journal-chains-of-planar-shapes}.  The
polygonal restriction is not important at all; one can extend
Theorem~\ref{thm:slender-adornments} to arbitrary strictly slender
adornments via a limit of polygonal adornment approximations.

It remains open whether Theorem~\ref{thm:slender-adornments} holds in
the case of non-strictly slender adornments.  The strictly slender
restriction is fundamental to our argument, because
Lemma~\ref{lemma:strictly-slender} is false if one replaces ``strictly
slender'' with ``slender''.  To see this, suppose $\LL$ is a
triangulated chain with a right-triangle adornment.  There are
linkages $\epsilon$-related to $\LL$ that are not slender (one needs
only shift the vertex at the right angle slightly to give it a
non-inward normal).

\section{Extensions of the Definition Methodology}
\label{sec:extension}

In this section, we outline how our methodology for defining self-touching
configurations could be extended to other types of objects.
The definitions suggested in this section are preliminary.

\subsection{Polygonal Assemblies and Rigid Origami}

A \defword{polygonal assembly} is a set of polygons in 3-space and a
relation indicating which polygonal edges are attached together.
Polygonal assemblies arise in the study of \defword{rigid origami},
where the edges correspond to creases.

Polygonal assemblies are a natural generalization of 2D linkages to
3D.  Edges are replaced by polygons, and vertices are replaced by
edges.  We can define $\ord(P_1, P_2)$ as the area of the projection
of $P_2$ onto $P_1$, signed by which side of $P_1$ $P_2$ is on, in
direct analogy with the linkage definition.  With this definition of
$\ord$, we can extend universal reconfigurability results for
single-vertex rigid origami~\cite{Streinu-Whiteley-2004} to self-touching
configurations (i.e., those in which two sheets are folded flat
against each other).  As in the case of Slender Adornments, we can
only prove universal reconfigurability for self-touching
configurations that are in an open subset of all configurations, in
this case those for convex cones.

\subsection{3D Linkages}

It does not seem possible to model 3D linkages with our methodology.
The difficulty is that the codimension of object elements is 2 for 3D
linkages (compared with 1 for 2D linkages and polygonal assemblies).
Consequently, there is a continuum of ways in which two bars can
overlap (each relative direction is possible), and thus it seems no
function has the necessary continuity properties to define the
annotations.

\subsection{Paper}

Paper is a much more interesting challenge for our definition
methodology than linkages or polygonal assemblies.  Indeed, paper has
an infinite-dimensional configuration space, so we have to worry about
the right topology to use.  Moreover, with paper, the individually
movable pieces are infinitesimally small, so an order function that is
zero when it is not directly above or below a piece would be zero
everywhere.
\ifabstract
See Appendix~\ref{paper appendix} for a preliminary attempt
in this direction.
\else

We work with a unit-size $n$-dimensional closed sheet of paper in
$(n+1)$-dimensional space.  A (possibly self-crossing) configuration
of order $k$ of a sheet of paper is represented by a mapping $f$ from
$[0,1]^n$ to $\RR^{n+1}$. The order $k$ of the configuration indicates
the regularity of the mapping; $f$ must be piecewise $C^k$ except
along a finite set of $C^k$ hyper-surfaces of finite hyper-area. To
avoid stretching the paper, $f$ must also be an isomorphism, i.e.,
wherever it is defined, its Jacobian must be an orthogonal projection
of rank $n$.

A nontouching configuration is simply a configuration for which $f$ is
injective. We now consider an example order function; this is preliminary
work.

\subsubsection{``Distance with Obstacles'' Order Function}

First we consider the following order function that maps two points on the
paper to a real number:
\iffull
$$ \ord(a, b) = d_o(a_+, b) - d_o(a_-, b), $$
\else
$ \ord(a, b) = d_o(a_+, b) - d_o(a_-, b) $,
\fi
where
$d_o(a_+,b)$ is the infimum of the lengths of the paths that start from the
positive side of the paper at $a$ and end at $b$ without crossing the
paper. This function is nice because it is continuous when the
configuration is varied in a nontouching way, and when the two points $a$ and
$b$ converge towards each other in a sequence of nontouching
configurations, the order function converges to a limit that depends on
the side of the paper from which $b$ converges to $a$. The annotation
function is produced by applying the order function to each pair of points
on the paper. This defines a set of annotated nontouching configurations.

Before we can define noncrossing configurations, we need to specify a
distance function that will define the topology we are using when
taking limits. We define this topology over all functions like $f$,
except that we do not impose the isomorphism constraint. The distance
between $f$ and $g$ is defined by: $$ d(f,g) = \max(\sup
|f(x)-g(x)|,\quad \sup \|D_f(x)-D_g(x)\|, \quad \sup
|\ord(f(x),f(y))-\ord(g(x),g(y))|) $$ where $x$ and $y$ are in
$[0,1]^n$ and $D_f(x)$ is the Jacobian of $f$ around $x$ divided
by the norm of the second order derivative of $f$ around
$x$. Essentially, $d$ is the supremum norm applied to the functions,
their derivatives and their annotations, except that the derivatives
are scaled by a factor inversely proportional to the second order
derivative of each function around that point (multiplied by zero at
crease points).  Without this scaling, points that should converge
near creases would cause convergence problems.\footnote{This problem
could be solved in other ways such as by using an integral norm
instead of the supremum norm (might allow kinks to remain in the
paper), or by considering that the crease must form at its final
location after a finite number of steps in the limit taking process.}

We can now define a noncrossing configuration as a limit of annotated
noncrossing configurations for which the isometry constraint has been
dropped. We drop the isometry constraint to ensure that we do not miss any
self-touching configurations.

The order function we have chosen is somewhat tedious to work with because
of its global nature. It would be nice to find a definition of the order
function that only depends on $f$ in a neighborhood of the points to which
it is being applied. This is a difficult task because the paper can have
very small features, so there is no canonical neighborhood size to take.
One possibility might be to pick the largest neighborhood over which $f$
behaves ``nicely''. This is all left for future work.
\fi

\subsection{Conclusion}

In this paper, we have introduced a new topological definition of
self-touching 2D linkages.  It is equivalent to the previously proposed
definition, but is easier to work with.  In particular, we have shown how to
use it to extend the Carpenter's Rule Theorem to self-touching linkages.

This result demonstrates the advantages of our topological definition
over the previously proposed combinatorial definition.  The underlying
topological methodology can be used to define the self-touching
configurations of other classes of codimension-$1$ self-touching objects
such as origami and polygonal assemblies in terms of their nontouching
configurations.  Objects of codimension $2$ or more seem more
difficult to characterize.

\let\realbibitem=\bibitem
\def\bibitem{\par \vspace{-1.2ex}\realbibitem}

\begin{small}
\bibliographystyle{alpha}
\bibliography{paper}
\end{small}

\ifabstract

\appendix

\section{Proofs of Algebraicity}
\label{algebraic appendix}

\begin{proof}[Proof of Lemma~\ref{lemma:ord-func-prop}]
We prove that $d_+$ is a semi-algebraic function that is
continuous over edges that do not intersect in their interior;
then $d_-$ has these properties by a similar argument.  We use the
reference frame centered at the first vertex of $e_1$, with the $x$
axis directed along $e_1$.  In this reference frame, edge $e_1$
extends from $(0,0)$ to $(l,0)$.

Now, replace $e_2$ with its intersection with the region $R_+ =
\{(x,y) \mid y \ge 0, l \ge x \ge 0\}$.  If the intersection is empty
(a semi-algebraic condition), then $d_+=0$.  Otherwise, the
coordinates $(x_1, y_1)$ and $(x_2, y_2)$ are a semi-algebraic
function of the endpoints of $e_2 \cap R_+$, because whether $e_2$
intersects each boundary of $R_+$ can be determined using line-segment
intersections and thus can be tested with a boolean combination of
polynomial inequalities.  For each possible set of intersections,
the points of intersection can be semi-algebraically computed from the
original coordinates.

Clearly, $d_+=0$ if $l=0$.  The reader can check that
$d_+=\left|f(x_2/l)-f(x_1/l)\right|l$, where $f(z)=\max(\min(z,1),0)$.
$f$ is a boolean combination of polynomial inequalities, and thus
$d_+$ is a semi-algebraic function everywhere.

Because $d_+$ is uniformly zero when $e_2$ does not intersect $R_+$,
$d_+$ is continuous when $e_2 \cap R_+$ is empty.  Notice that our
formula for $d_+$ in terms of $f$ would still be true if we had
replaced $e_2$ with its intersection with the closed upper half plane
$H$ instead of $R_+$.  Because $f$ is continuous, $d_+$ is continuous
everywhere where $e_2 \cap H$ is a continuous function of the
coordinates of $e_2$.  Similarly, $d_+$ is continuous everywhere where
$e_2 \cap R_+$ is a continuous function of the coordinates of $e_2$.
Thus, $d_+$ is continuous except when both endpoints of $e_2 \cap R_+$
lie along the boundary of $H$ intersect the boundary of $R_+$.  This
exceptional case occurs only if $e_2$ and $e_1$ intersect in the
interior.  Thus $d_+$ is continuous everywhere except when $e_2$
intersects $e_1$ in the interior.

Now, consider sequences $e_1^n$ and $e_2^n$ defined in the statement
of Part~\ref{lembullet:limit-ord} of this lemma.  By continuity,
$\ord(e_1^n, e_2^n)$ converges to $\ord(e_1, e_2)$, unless the limits
$e_1$ and $e_2$ share a common interval.  If they do, notice that
$e_1^n$ and $e_2^n$ are nontouching, and so for sufficiently large
$n$, $\ord(e_1^n, e_2^n)$ must be within $\epsilon$ of either $l$ or
$-l$, where $l$ is the length of that common interval.  The result
follows.\end{proof}

\begin{proof}[Proof of Theorem~\ref{th:conf-semialgebraic}]
We start with $\conf_{\epsilon}(\LL)$, which is defined by requiring
each bar length to be within $\epsilon$ of its length in $\LL$.  For a
bar between points $(x_i, y_i)$ and $(x_j, y_j)$, with a length $l_k$
in $\LL$, we have the following constraints:
\begin{eqnarray*}
(x_i-x_j)^2 + (y_i-y_j)^2 \le & (l_k+\epsilon)^2, &\\ (x_i-x_j)^2 +
(y_i-y_j)^2 \ge & (l_k-\epsilon)^2 & \textrm{ if $l_k \ge \epsilon$.}
\end{eqnarray*}
\noindent Because these conditions are semi-algebraic,
$\conf_{\epsilon}(\LL)$ is semi-algebraic.


$\nconf_{\epsilon}(\LL)$ is defined like $\conf_{\epsilon}(\LL)$, but
with additional nontouching constraints.  We re-use a strategy
presented in Equation~(3.6) of~\cite{InfinitesimallyLocked_LasVegas},
based on the following idea: if two bars do not intersect, then one of
the bars lies completely on one side of the other bar, i.e., both ends
of the first bar are on the same side of the other bar. The condition
in~\cite{InfinitesimallyLocked_LasVegas} has to be slightly changed by
making the inequalities in it strict, to prevent self-touching in
addition to self-intersection.

Unfortunately, this condition is too strong, as it prevents bars from touching at
their endpoints when the distance in the graph between the endpoints is
zero. If there is a path $v_{i_0}, \dots, v_{i_n}$ in the graph between two
vertices $v_{i_0}$ and $v_{i_n}$, then we can test that the distance
between them in the graph is zero using the equation
$$ \sum_{j=0}^{n-1} (x_{i_j}-x_{i_{j+1}})^2 + (y_{i_j}-y_{i_{j+1}})^2 = 0.$$
When this condition holds, we augment the strict nontouching condition by
explicitly allowing $v_{i_0}$ and $v_{i_n}$ to touch. We allow this as
long as the other vertex of each bar does not touch the other bar, or one
of the bars has zero length. All these conditions can be expressed using
polynomials.

Combining all these polynomial conditions, we find that
$\nconf_{\epsilon}(\LL)$ is semi-algebraic.  It follows that
$\asconf{\epsilon}{\LL}$ is semi-algebraic, because it is the image of
the semi-algebraic set $\nconf_{\epsilon}(\LL)$ under the
semi-algebraic annotation map (by Lemma \ref{lemma:annot-func-prop}).
Similarly, $\annot{\conf_{\epsilon}(\LL)}{\LL}$ is semi-algebraic.


Finally, $\igconf{\epsilon}{\LL}$ is the intersection of the
topological closure of the semi-algebraic set $\asconf{1}{\LL}$ with
the semi-algebraic set $\annot{\conf_{0}(\LL)}{\LL}$, and is thus
semi-algebraic.
\end{proof}

\section{Proofs of Equivalence}
\label{equivalence appendix}

This appendix proves Theorem~\ref{thm:equivalence}.
$\gconf{0}{\LL} \subset \cgconf{0}{\LL}$ will be proven by showing
that all the conditions in the definition are indeed met, and
$\cgconf{0}{\LL} \subset \igconf{0}{\LL}$ will be shown by
constructing a converging sequence of nontouching configurations.

\begin{definition}
A $\delta$-perturbation of a combinatorial self-touching configuration
$C$ is a nontouching configuration in which each vertex is within
$\delta$ of its location in $C$ and the relative positions of the bars
are preserved.
\end{definition}

\begin{lemma}\label{lemma:equivalence-1}
$\cgconf{0}{\LL} \subset \igconf{0}{\LL}$
\end{lemma}
\begin{proof}
Let $(C,A) \in \cgconf{0}{\LL}$.  By definition, $C \in \conf_0(\LL)$.
Consequently, it suffices to show $(C,A) \in
\overline{\annot{\nconf_{1}(\LL)}{\LL}}$.  By Theorem 3.1 of chapter 1
of Ares Rib\'{o} Mor's Thesis~\cite{Ribo-2006}, for any $C \in
\cgconf{0}{\LL}$, for any $\delta > 0$, there is a nontouching
$\delta$-perturbation $C_\delta$ of $C$.  Because a
$\delta$-perturbation changes bar lengths by at most $2\delta$,
$C_\delta \in \annot{\nconf_{2\delta}(\LL)}{\LL}$.  Because the
relative positions of the bars are preserved in a
$\delta$-perturbation, and the annotation function is continuous for
nontouching configurations, the $C_\delta$ converge to $C$ as $\delta
\rightarrow 0$.  Thus $C \in \overline{\annot{\nconf_{1}(\LL)}{\LL}}
\Rightarrow$ $C \in \igconf{0}{\LL}$, as desired.
\end{proof}

Because the argument of~\cite{Ribo-2006} is quite involved, we provide a
simpler proof that $\cgconf{0}{\LL} \subset \rgconf{0}{\LL}$ to give
some intuition for this result.  The basic strategy is to perturb the
bars within each geometric location containing bars (which we call
corridor segments) so that the bars within the corridor segment are
parallel to each other and are ordered in a consistent fashion.  We
then use the information from the magnified views to implement the
direct connections at the vertex locations.  The details follow.

\begin{definition}
A \defword{corridor} of $\LL$ is a line containing at least one bar of
$\LL$.  A \defword{corridor segment} is an interval in a corridor
which has a vertex location at each end and no other vertex locations
intersecting it.
\end{definition}

In Figure \ref{fig:combinatorial-def-1}, the segments between vertex
locations are the corridor segments, and the two corridor segments
along the bottom combine to form a single corridor.

Because our combinatorial noncrossing configuration is well-annotated,
the annotations define a total order on the bars within each corridor
segment.

\begin{lemma}\label{lemma:total-ordering}
In a combinatorial noncrossing configuration $C$, each corridor $S$
has a total ordering on its bars, that when restricted to any corridor
segment is the order determined by the annotations on that corridor
segment.
\end{lemma}

\begin{proof}
We piece together the ordering for the corridor by proceeding down the
corridor, successively merging the ordering so far with that of each
corridor segment.  At vertex location $v$, we can merge the ordering
so far with the ordering for the next corridor segment if the these
two orderings are consistent.  Because each bar exists for a contiguous
interval along the corridor, it suffices to check that the two
corridor segments of $S$ incident to $v$ have consistent orderings.
The claim that these orderings are consistent is a special case of the
microscopic noncrossing condition at $v$.
\end{proof}

\begin{lemma}\label{lemma:equivalence-1b}
$\cgconf{0}{\LL} \subset \rgconf{0}{\LL}$
\end{lemma}
\begin{proof}
Suppose $(C,A)\in \cgconf{0}{\LL}$.  We will construct a sequence of
nontouching configurations of an extension $\LL'$ of $\LL$, converging
to an extension $(C',A')$ whose reduction to $\LL$ is $(C,A)$.


$\LL'$ is constructed by splitting from each vertex of $\LL$ into one
vertex for each incident edge (with each new vertex incident with its
edge and a zero-length bar to the lexically first new vertex).
Observe that pairs of vertices directly connected in $\LL'$ are
precisely those that are directly connected in $\LL$.  We will call
the zero-length bars extension bars, and the others original bars.
Because both endpoints of each extension bar are endpoints of original
bars, specifying the locations of the original bars defines the
configuration of $\LL'$.  In $C'$, each vertex will (necessarily) lie
in the same place as the vertex of $C$ that it was split from.




Let $0<\delta < \min(1/n,\lmin,(\sin \tmin)/(2n))$ be a real number,
where $\lmin$ is the minimum bar length in $\LL$, $\tmin$ is the
minimum angle between nonparallel bars in $(C,A)$, and $n$ is the
number of bars in $\LL$.


Let $S$ be a corridor with $m$ original bars in it (we treat extension
bars as not belonging to any corridor).  By Lemma
\ref{lemma:total-ordering}, there is a total ordering on the bars in
$S$ compatible with the annotation orderings.  Thus we can assign
distinct offsets $\psi(e) \in \{0, 1, \ldots, m-1\}$ to the bars in
$S$ in a way compatible with the annotation orderings.

Arbitrarily select a unit vector $\overrightarrow{u}$ normal to $S$.
Imagine shifting each original bar $(v,w)$ contained in $S$ from its
location in $C'$ by $\delta^2 \psi((v,w))\overrightarrow{u} $.
Consider also the circle of radius $\delta$ centered at $C'(v)$.
$\delta^2 \psi((v,w)) \le \delta^2n < \delta$, so $v$'s shifted
location is inside this circle.  Because $\delta < \lmin$, $w$'s
shifted location is outside, so the shifted bar intersects $v$'s
circle exactly once.  We set $C_\delta(v)$ to be this unique
intersection of $v$'s circle and shifted bar (and similarly for all
the other vertices in $S$).

We now show that $C_\delta$ is nontouching for sufficiently small
$\delta$.  Original bars never intersect extension bars except at
common vertices because the former lie entirely outside the circles of
radius $\delta$, and the latter entirely inside.  Intersections
between original bars in a common corridor are impossible by
construction.  Because $C_\delta$ converges to $C'$ as $\delta
\rightarrow 0$, original bars that have nonzero separation in $C'$ do
not intersect in $C_\delta$ for small enough $\delta$.  It remains to
handle pairs of original bars that touched in $C'$ precisely at a
vertex location $v$.  Take two such bars, with offsets $i$ and $j$,
and with a relative angle of $\theta$ in $C'$.  If they intersect in
$C_\delta$, it is at a distance at most $(i+j) \delta^2 / \sin
|\theta| \le 2n \delta^2/\sin|\tmin| < \delta$ from $C'(v)$.  Because
neither bar intersects the circle of radius $\delta$ about $v$, no two
original bars cross in $C_\delta$.

We have constructed $C_\delta$ so that the orderings of vertices
around the circles of radius $\delta$ are compatible with the ordering
of inbounds at each vertex location.  The microscopic noncrossing
condition therefore forbids extension bars from crossing.  Thus
$C_\delta$ is noncrossing.

Having shown that $C_\delta$ is noncrossing and converges to $C'$ as
$\delta$ goes to 0, all that remains is to show that $A_\delta$, the
corresponding annotations, converge to $A'$. Because $(C,A)$ is well
annotated, Lemma~\ref{lemma:ord-func-prop} implies each annotation in
$A_\delta$ for pairs of bars not sharing a corridor segment converges
to the corresponding annotation in $A'$.  By
Lemma~\ref{lemma:ord-func-prop}(\ref{lembullet:limit-ord}), the bars
$A_\delta$ for pairs of bars sharing a corridor segment have
accumulation points at $\pm$ the corresponding annotations in $A'$.
But the offsets for bars in the corridors were chosen precisely so the
signs of the annotations in $A_\delta$ matched the signs of
annotations in $A$.  Thus, the annotations converge to $A'$.

Taking any sequence of $\delta$s that converges to zero, we conclude
that $\cgconf{0}{\LL} \subset \rgconf{0}{\LL}$.
\end{proof}

\begin{lemma} \label{lemma:equivalence-2}
$\rgconf{0}{\LL} \subset \cgconf{0}{\LL}$.
\end{lemma}
\begin{proof}
Take any extended noncrossing configuration $(c,a) \in
\rgconf{0}{\LL}$; we need to prove that $(c,a)$ is macroscopically
noncrossing, well-annotated, well-ordered, and microscopically
noncrossing.  Let $(c_k,a_k)$ be a sequence of nontouching
configurations of some extension $\LL'$ of $\LL$ that converges to an
extension $(c',a')$ of $(c,a)$.  The macroscopic noncrossing condition
is easily met because the configurations in which bars have a strict
crossing form an open set, so that a limit of nontouching
configurations cannot have a strict crossing.  The well-annotated
condition follows immediately from
Lemma~\ref{lemma:ord-func-prop}(\ref{lembullet:limit-ord}) and the
continuity of $\ord$ over non-interior-intersecting edges.



To prove the well-ordered and microscopically noncrossing conditions,
we draw small circles around each vertex location.  Take $\delta$
small enough that, in $c'$, the circle of radius $4 \delta$ drawn
around a vertex location does not contain any other vertex locations,
and does not intersect any edges that are not inbounds to the vertex
location.  For some $k_0$ and all $k \ge k_0$, each vertex is less
than $\delta$ away from its final location, so each bar with nonzero
length in $c'$ crosses the circles corresponding to its endpoints, and
each bar with length 0 in $c'$ is contained within the circle that is
common to both its endpoints.  Furthermore, for some $k_1$ and all $k
\ge k_1$, annotations in $a_k$ that have nonzero limits have strictly
the same sign as in $a'$.  Henceforth, we assume that $k \ge k_0,
k_1$.

Suppose $e_i$ is an edge connecting vertex location $v'$ to vertex
location $v$.  Let $C_v$ be the circle centered at $v$ with radius
$2\delta$.  Then $e_i$ intersects $C_v$ somewhere between vertex $v$
and $v'$.  We define the angle $\alpha_{i,k}$ to be the angle from a
reference direction to this intersection between $e_i$ and $C_v$.
Without loss of generality, we may assume the reference direction is
not $\lim_{k\rightarrow\infty} \alpha_{i,k}$ for any bar entering any
vertex location in $\LL$.  Then because for $k$ sufficiently large,
$e_i$ and $e_j$ , there exists $k_2$ such that if
$\lim_{k\rightarrow\infty} \alpha_{i,k} > \lim_{k\rightarrow\infty}
\alpha_{j,k}$, then for all $k \ge k_2$, $\alpha_{i,k} >
\alpha_{j,k}$.  Henceforth, we assume that $k \ge k_2$.

We now define the necessary well-ordering.  We say $Ent(E,e_i) \orel
Ent(E,e_j)$ if for all sufficiently large $k$, $\alpha_{i,k} >
\alpha_{j,k}$.  This is a well-ordering on inbounds at $v$ for $k \ge
k_2$.  Set $\theta_i = \lim_{k\rightarrow\infty} \alpha_{i,k}$.
Inbound edges $e_i$ and $e_j$ share the same entrance $E$ at $v$ if
and only if $\theta_i = \theta_j$.

If $\theta_i > \theta_j$, then because $\lim_{k\rightarrow\infty}
\alpha_{i,k} = \theta_i > \theta_j = \lim_{k\rightarrow\infty}
\alpha_{j,k}$, for all sufficiently large $k$, $\alpha_{i,k} >
\alpha_{j,k}$, and thus the well-ordering condition is satisfied in
this case.

If inbound edges $e_i$ and $e_j$ share a common entrance, then in $c'$
they overlap.  The annotations now give the relationship between
$e_{i}$ and $e_{j}$ in $c_k$.  Assume $e_i$ is oriented from $v$
towards $v'$, and $\ord(e_i, e_j) > 0$ in $c'$ (the other cases are
symmetric).  Then for sufficiently large $k$, in $c_k$, $\ord(e_{i},
e_{j}) > 0$.  Because in $c_k$, $e_{i}$ and $e_{j}$ are nontouching, it
follows from the fact that $\alpha_{i,k} - \alpha_{j,k}$ goes to zero
as $k \rightarrow \infty$ that $\alpha_{i,k} > \alpha_{j,k}$ for
sufficiently large $k$.  This completes the proof of the well-ordering
condition.

Consider now the portion of the linkage in configuration $c_k$ which
is contained inside $C_v$.  This portion of the linkage must be a
planar graph, because $c_k$ is noncrossing.  Two intersection points of
bars with $C_v$ are connected by this graph if and only if the
corresponding inbounds are directly connected.  Given that the order
of the intersection points around $C_v$ matches the order of the
inbounds to $v$, the fact that the graph is planar is precisely the
microscopic noncrossing condition.

Thus $\rgconf{0}{\LL} \subset \cgconf{0}{\LL}$.
\end{proof}

Theorem \ref{thm:equivalence} now follows immediately from Lemmas
\ref{lemma:equivalence-1} and \ref{lemma:equivalence-2} and the fact
that $\LL$ is a (trivial) extension of~$\LL$.

\section{Proofs for Slender Adornments}
\label{slender appendix}

\begin{proof}[Proof of Lemma \ref{lemma:strictly-slender-connected}]
Our argument follows the paradigm of Lemma~\ref{lemma:sconf-connected}.  The canonical configurations are those
in which the chain is straight (if the chain has $n$ adornments, there
are potentially $2^n$ such canonical configurations, determined by the
choices of reflection for each adornment).  Thus, each configuration
of $\LL$ has a corresponding canonical configuration.

Suppose $\epsilon < \delta$, for the $\delta$ defined in
Lemma~\ref{lemma:strictly-slender}.  Then for any $C \in
\nconf_{\epsilon}(\LL)$, $C$ is connected to its canonical
configuration by Theorem 8 of~\cite{chains-of-planar-shapes}.  Thus,
if we can show $C$'s canonical configuration is connected to an
element of $\nconf_0{\LL}$, it will follow that
$\nconf_{\epsilon}(\LL)$ \dominates $\nconf_0(\LL)$.

Because we have an open chain, there is a path linearly interpolating
between the canonical configuration for $C$ and the corresponding
canonical configuration in $\nconf_0(\LL)$.  This path is contained
entirely in $\nconf_{\epsilon}(\LL)$, because two different slender
adornments with bases in straight configuration never touch except at
the endpoints of the bases, and within each triangulated adornment,
for sufficiently small $\epsilon$ there will be no crossings.  Thus
the canonical configuration for $C$ is in the same connected component
as $C' \in \nconf_0(\LL)$, as desired.
\end{proof}

\section{Generalizing Limit Definitions to Paper}
\label{paper appendix}

We work with a unit-size $n$-dimensional closed sheet of paper in
$(n+1)$-dimensional space.  A (possibly self-crossing) configuration
of order $k$ of a sheet of paper is represented by a mapping $f$ from
$[0,1]^n$ to $\RR^{n+1}$. The order $k$ of the configuration indicates
the regularity of the mapping; $f$ must be piecewise $C^k$ except
along a finite set of $C^k$ hyper-surfaces of finite hyper-area. To
avoid stretching the paper, $f$ must also be an isomorphism, i.e.,
wherever it is defined, its Jacobian must be an orthogonal projection
of rank $n$.

A nontouching configuration is simply a configuration for which $f$ is
injective. We now consider an example order function; this is preliminary
work.

\paragraph{``Distance with Obstacles'' Order Function.}

First we consider the following order function that maps two points on the
paper to a real number:
\iffull
$$ \ord(a, b) = d_o(a_+, b) - d_o(a_-, b), $$
\else
$ \ord(a, b) = d_o(a_+, b) - d_o(a_-, b) $,
\fi
where
$d_o(a_+,b)$ is the infimum of the lengths of the paths that start from the
positive side of the paper at $a$ and end at $b$ without crossing the
paper. This function is nice because it is continuous when the
configuration is varied in a nontouching way, and when the two points $a$ and
$b$ converge towards each other in a sequence of nontouching
configurations, the order function converges to a limit that depends on
the side of the paper from which $b$ converges to $a$. The annotation
function is produced by applying the order function to each pair of points
on the paper. This defines a set of annotated nontouching configurations.

Before we can define noncrossing configurations, we need to specify a
distance function that will define the topology we are using when
taking limits. We define this topology over all functions like $f$,
except that we do not impose the isomorphism constraint. The distance
between $f$ and $g$ is defined by: $$ d(f,g) = \max(\sup
|f(x)-g(x)|,\quad \sup \|D_f(x)-D_g(x)\|, \quad \sup
|\ord(f(x),f(y))-\ord(g(x),g(y))|) $$ where $x$ and $y$ are in
$[0,1]^n$ and $D_f(x)$ is the Jacobian of $f$ around $x$ divided
by the norm of the second order derivative of $f$ around
$x$. Essentially, $d$ is the supremum norm applied to the functions,
their derivatives and their annotations, except that the derivatives
are scaled by a factor inversely proportional to the second order
derivative of each function around that point (multiplied by zero at
crease points).  Without this scaling, points that should converge
near creases would cause convergence problems.\footnote{This problem
could be solved in other ways such as by using an integral norm
instead of the supremum norm (might allow kinks to remain in the
paper), or by considering that the crease must form at its final
location after a finite number of steps in the limit taking process.}

We can now define a noncrossing configuration as a limit of annotated
noncrossing configurations for which the isometry constraint has been
dropped. We drop the isometry constraint to ensure that we do not miss any
self-touching configurations.

The order function we have chosen is somewhat tedious to work with because
of its global nature. It would be nice to find a definition of the order
function that only depends on $f$ in a neighborhood of the points to which
it is being applied. This is a difficult task because the paper can have
very small features, so there is no canonical neighborhood size to take.
One possibility might be to pick the largest neighborhood over which $f$
behaves ``nicely''. This is all left for future work.

\fi

\end{document}